\def\BibTeX{{\rm B\kern-.05em{\sc i\kern-.025em b}\kern-.08em
    T\kern-.1667em\lower.7ex\hbox{E}\kern-.125emX}}
\def\*#1{\mathbf{#1}}
\newcommand{\vect}[1]{{\mathbf{#1}}}
\newcommand{\mat}[1]{{\mathbf{#1}}}
\newcommand{\rsubsK}{\left[ [\sf K] \right]^{\sf r}}
\renewcommand{\sf}[1]{\mathsf{#1}}
\newcommand{\cT}{\mathcal{T}}
\newcommand{\cL}{\mathcal{L}}
\newcommand{\cB}{\mathcal{B}}
\newcommand{\K}{\mathsf{K}}
\renewcommand{\r}{\mathsf{r}}
\newcommand{\Ufi}{$\mat{U}_5$}
\newcommand{\lo}[1]{{\color{magenta}#1}}
\renewcommand{\lo}[1]{{#1}}
\newcommand{\sh}[1]{{\color{green}#1}}
\renewcommand{\sh}[1]{}
\newif\ifADDpagenumber
\newtheorem{theorem}{Theorem}
\newtheorem{lemma}{Lemma}
\newtheorem{corollary}{Corollary}
\begin{document}

\title{Necessity of Cooperative Transmissions for Wireless MapReduce}

\sloppy
\allowdisplaybreaks[4]

\author{
    \IEEEauthorblockN{Yue Bi\IEEEauthorrefmark{1}, Mich\`ele Wigger\IEEEauthorrefmark{1}}
    \IEEEauthorblockA{\IEEEauthorrefmark{1} \textit{LTCI, Telecom Paris, IP Paris}, 91120 Palaiseau, France; 
    \{bi, michele.wigger\}@telecom-paris.fr}
}

\maketitle

\ifADDpagenumber
\thispagestyle{plain} 
\pagestyle{plain}
\fi

\begin{abstract}
The paper presents an improved upper bound (achievability result) on the optimal tradeoff between Normalized Delivery Time (NDT) and computation load for distributed computing MapReduce systems in certain ranges of the parameters. The upper bound is based on interference alignment combined with zero-forcing. The paper further provides a lower bound (converse) on the optimal NDT-computation tradeoff that can be achieved when IVAs are partitioned into sub-IVAs, and these sub-IVAs are then transmitted (in an arbitrary form) by a single node, without cooperation among nodes. For appropriate linear functions (e.g., XORs), such non-cooperative schemes can achieve some of the best NDT-computation tradeoff points so far obtained in the literature. However, as our lower bound shows, any non-cooperative scheme achieves a worse NDT-computation tradeoff than our new proposed scheme for certain parameters, thus proving the necessity of cooperative schemes like zero-forcing to attain the optimal NDT-computation tradeoff.
\end{abstract}
\begin{IEEEkeywords}
Wireless distributed computing, interference alignment, zero-forcing, NDT.
\end{IEEEkeywords}

\section{Introduction}

MapReduce systems are becoming important building blocks also in wireless scenarios because they allow us to distribute heavy computation tasks over multiple distributed nodes. Important performance parameters of such systems are the  \emph{computation load} $\sf r$, which characterizes the average amount of data stored at the nodes (i.e., memory requirements), and the  \emph{Normalized Delivery Time (NDT)} $\Delta$, which characterizes the duration of the wireless shuffle phase (i.e., communication requirements).  The optimal tradeoff between memory requirements and communication duration was first characterized in \cite{li_fundamental_2018}, but under the assumption of perfect multicast communication links. The optimal scheme for this noiseless setup splits intermediate values (IVA) into message parts and then communicates appropriate XORs of these parts. Decoding requires that receiving nodes use their knowledge of the message parts transmitted by other nodes to cancel non-desired parts of the XORs. 
Similar assumptions of ideal multicast or noiseless communication links have also been adopted in several related works, e.g.,~\cite{yan2019storage, xu_new_2021, wan_distributed_2022}.

Several subsequent works \cite{li_wireless_2019,bi_dof_2022,bi_normalized_2024,yuan_coded_2022} have studied the \emph{NDT-computation tradeoff} of \emph{wireless} MapReduce, i.e., the smallest NDT that can be achieved under a constraint on the computation load. The problem was first investigated for wireless interference networks in \cite{li_wireless_2019}, which proposed a coding scheme based on one-shot beamforming and zero-forcing (ZF).  In our previous works in \cite{bi_dof_2022,bi_normalized_2024}, we  obtained improved NDT-computation tradeoffs by introducing interference alignment (IA) \cite{cadambe_interference_2008, jafar_degrees_2008} to the shuffle phase. For  computation loads $\sf r=\lfloor (\sf K-1 )/2\rfloor$, where $\sf K$ denotes the number of nodes and is assumed to be odd, our work in \cite{bi_normalized_2024}  also  proposes  an improved IA scheme based on ZF. Notice that ZF introduces complex dependencies between IA matrices, rendering  performance analysis of  IA schemes involved \cite{annapureddy_degrees_2012}. 

In this paper, we improve our IA \& ZF scheme in \cite{bi_normalized_2024}. Our new scheme applies to an arbitrary number of nodes  $\sf K$ but still requires that $\sf r=\lfloor( \sf K-1 )/2\rfloor$. For odd values of $\sf K$ it strictly improves over our previous scheme in \cite{bi_normalized_2024}. The new scheme is again based on IA \& ZF and the analysis requires proving algebraic independence of certain functions. We provide an analytic proof for $\sf K=5$ in this paper as well as numeric verification for $\sf K\in\{6,\ldots, 15\}$.

Inspired by the excellent performance of the scheme in \cite{li_fundamental_2018} and to avoid complicated analyses, several schemes (also for related cache-aided scenarios) \cite{hachem_degrees_2018, bi_dof_2022,bi_normalized_2024} focused on exploiting the multicast behaviour of wireless MapReduce communication and the IVA side-information at the nodes, but without any form of cooperation between transmit signals. Specifically, the schemes divide the IVAs into submessages and each submessage is transmitted (in an arbitrary form) by only one of the nodes, without cooperation from the other nodes. Transmit signals across nodes are thus  independent of each other. Nodes still use all their side-information about IVAs in their decoding steps. In this work, we provide a lower bound (converse result) on the NDT-computation tradeoff that can be achieved by any such non-cooperative scheme. We further show that this lower bound lies above the  NDT-computation tradeoffs obtained by our new IA \& ZF scheme, thus proving strict suboptimality of non-cooperative schemes. Notice that the previously obtained NDT-computation tradeoffs do not improve over the non-cooperative lower bound and do not allow us to obtain the desired conclusion.

\textit{Notations:} We use standard notation, and also
define $[n] \triangleq \{1,2,\ldots,  n\}$ and  $[\mathcal{A}]^t$ as the collection of all the subsets of $\mathcal{A}$ with cardinality $t$. 
i.e. $[\mathcal{A}]^t \triangleq \{\cT\colon \cT \subseteq \mathcal{A}, |\cT|=t\}$. 
%

\section{Wireless MapReduce Framework}\label{sec:WDC_system}
Consider a wireless MapReduce system with $\sf K \geq 5$ nodes labeled $1,\ldots,\sf K$, with $\sf M$ input files $W_1,\ldots,W_{\sf M}$, and with $\sf K$ output functions $\phi_1,\ldots,\phi_{\sf K}$, where function $\phi_k$ is assigned to Node~$k$.
The system follows the MapReduce framework, in which each output function is decomposed as
\begin{equation}
	\phi_q(W_1,\ldots,W_{\sf M}) = v_q(a_{q,1},\ldots,a_{q,\sf M}), \quad q\in[\sf K],
\end{equation}
where
\begin{equation}
	a_{q,p} = u_{q,p}(W_p), \quad p\in[\sf M],
\end{equation}
denotes the intermediate value (IVA) computed from file $W_p$.
All IVAs are assumed independent and consist of $\sf A$ i.i.d. bits.

The MapReduce framework has 3 phases:
\paragraph{Map phase}
Each node $k$ is assigned a subset of files $\mathcal{M}_k\subseteq[\sf M]$ and computes the IVAs
$\{a_{q,p} : p\in\mathcal{M}_k,\ q\in[\sf K]\}$.

\paragraph{Shuffle phase}
Nodes exchange IVAs over $\sf n$ uses of a wireless channel in full-duplex mode.
Node~$k$ transmits
\begin{equation}\label{eq:comp_encoding}
	\vect{X}_k
	=
	f_k^{(\sf n)}\!\left(\{a_{1,p},\ldots,a_{\sf K,p}\}_{p\in\mathcal{M}_k}\right),
\end{equation}
subject to the average power constraint
\begin{equation}\label{eq:power}
	\frac{1}{\sf n}\sum_{t=1}^{\sf n}\mathbb{E}[|X_k(t)|^2] \le \sf P.
\end{equation}
Node~$k$ observes
\begin{equation}\label{eq:channel}
	Y_k(t)
	=
	\sum_{k'\in[\sf K]\backslash\{k\}} H_{k,k'}(t)X_{k'}(t)
	+ Z_k(t),
	\quad t\in[\sf n],
\end{equation}
and $\vect{Y}_k \triangleq \left(Y_k(1) \dots, Y_k(\sf n)\right)$, where $\{H_{k,k'}(t)\}$ and $\{Z_k(t)\}$ are i.i.d. over time and across node pairs. Each coefficient $H_{p,q}(t)$ has independent and identically distributed real and imaginary parts, following a specified continuous distribution over a bounded interval $[-\sf H_{\max}, \sf H_{\max}]$ for some positive integer $\sf H_{\max}$.
All nodes have perfect channel state information.

Based on $\vect{Y}_k$ and the IVAs computed locally, Node~$k$ decodes the missing IVAs as
\begin{equation}\label{eq:comp_decoding}
	\hat{a}_{k,p}
	=
	g_{k,p}^{(\sf n)}\!\left(
	\{a_{q,i}\}_{i\in\mathcal{M}_k,q\in[\sf K]}, \vect{Y}_k
	\right),
	\quad p\notin\mathcal{M}_k .
\end{equation}

\paragraph{Reduce phase}
Each node $k\in[\sf K]$ applies its reduce function $v_k$ to its available IVAs  to compute its assigned output function $\phi_k$.

The performance of the distributed computing system is measured by two factors: \textit{computation load} and \textit{normalized delivery time (NDT)}.
The computation load is defined as
\begin{equation}\label{eq:r}
	\sf r \triangleq \sum_{k\in[\sf K]} \frac{|\mathcal{M}_k|}{\sf M},
\end{equation}
and NDT is defined as
\begin{equation}\label{eq:Delta}
	\sf \Delta
	\triangleq
	\varliminf_{\sf P\to\infty}
	\varliminf_{\sf A\to\infty}
	\frac{\sf n}{\sf A\,\sf K\,\sf M}\log \sf P .
\end{equation}

We define the optimal NDT--computation tradeoff $\Delta^*(\sf r)$ as the infimum of all $\sf \Delta$
for which there exist file assignments $\{\mathcal{M}_k\}$ and encoding/decoding functions
satisfying
\begin{equation}\label{eq:error_computing}
	\Pr\!\left[
	\bigcup_{k\in[\sf K]}
	\bigcup_{p\notin\mathcal{M}_k}
	\hat{a}_{k,p}\neq a_{k,p}
	\right]
	\to 0
	\quad \text{as } \sf A\to\infty .
\end{equation}

In this work, we choose a the well known combinatorial file assignment for the wireless MapReduce system.
Specifically, the input files $\{W_1,\ldots,W_{\sf M}\}$ are partitioned into $\binom{\sf K}{\sf r}$ disjoint bundles, each associated with a size-$\sf r$ subset $\mathcal{T}\in [[\sf K]]^{\sf r}$ and stored at the nodes in $\mathcal{T}$.
Since each file is stored at exactly $\sf r$ nodes, this assignment satisfies the computation load constraint.

\subsection{Relation with $\sf r$-fold cooperative channel}
Under this combinatorial file assignment, the Shuffle phase naturally induces a wireless channel with $\sf r$-fold transmitter cooperation.
In particular, consider the wireless interference channel described in \eqref{eq:channel} under the average power constraint \eqref{eq:power}.
For a given $\sf r\in[\sf K]$, any subset of $\sf r$ nodes $\mathcal{T}\in [[\sf K]]^{\sf r}$ jointly holds a message $a_{k, \mathcal{T}}$ intended for Node~$k$, for each $k\in[\sf K]\setminus\mathcal{T}$.
Each message $a_{k, \mathcal{T}}$ is uniformly distributed over $\{1,\ldots,2^{\sf n R_{k, \mathcal{T}}}\}$.

A rate tuple $(R_{k, \mathcal{T}} : \mathcal{T}\in [[\sf K]]^{\sf r},\ k\in[\sf K]\setminus\mathcal{T})$ is said to be achievable if there exist encoding and decoding functions such that the decoding error probabilities vanish as $\sf n\to\infty$.
The corresponding sum degrees of freedom is defined as
\begin{equation}
	\mathrm{SDoF}(\sf r) \triangleq \lim_{P\to\infty} \sup_{\vect{R} \in \mathcal{C}(P)}
	\sum_{\cT\in \rsubsK} \sum_{k\in[K]\setminus\cT} \frac{R_{k,\cT}}{\log P}.
\end{equation}
where the supremum is over all achievable rate tuples under power constraint $\sf P$.

We have the following lemma, which establishes a direct connection between the NDT and the SDoF \cite{bi_normalized_2024}.

\begin{lemma}\label{lem:E}Under the combinatorial file assignment, 
	for any $\sf r\in[\sf K]$,
	\begin{equation}\label{eq:DSDof}
		\sf \Delta(\sf r)
		=
		\left(1-\frac{\sf r}{\sf K}\right)\frac{1}{\mathrm{SDoF}(\sf r)}.
	\end{equation}
\end{lemma}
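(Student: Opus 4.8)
The plan is to establish the identity in \eqref{eq:DSDof} by a two-directional argument that translates between a coding scheme for the wireless MapReduce shuffle phase (achieving NDT $\Delta$) and a coding scheme for the associated $\sf r$-fold cooperative interference channel (achieving a sum-DoF value), and vice versa. The central bookkeeping fact is that under the combinatorial file assignment, Node~$k$ must learn exactly the IVAs $a_{k,p}$ for all $p\notin\mathcal{M}_k$, and these IVAs are grouped by the $\binom{\sf K}{\sf r}$ file bundles: all files in the bundle indexed by $\cT\in[[\sf K]]^{\sf r}$ are stored precisely at the nodes in $\cT$, so the aggregate IVA ``message'' $a_{k,\cT}$ destined for Node~$k$ (for $k\notin\cT$) is jointly available exactly at the $\sf r$ transmitters in $\cT$. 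Thus the shuffle phase \emph{is} an instance of the $\sf r$-fold cooperative channel, and conversely any such cooperative-channel code can be used to run the shuffle phase. The only content beyond this observation is matching up the normalizations in \eqref{eq:Delta} and in the definition of $\mathrm{SDoF}(\sf r)$.

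First I would fix a scheme for the shuffle phase using $\sf n$ channel uses with vanishing error as $\sf A\to\infty$. Each bundle $\cT$ contains $\sf M/\binom{\sf K}{\sf r}$ files, hence the message $a_{k,\cT}$ consists of $\sf A\cdot \sf M/\binom{\sf K}{\sf r}$ bits, and it is reliably delivered to Node~$k$ for each of the $\sf K-\sf r$ nodes $k\notin\cT$. Counting over all $\cT$ and all such $k$, the total number of reliably communicated bits is $(\sf K-\sf r)\binom{\sf K}{\sf r}\cdot \sf A\,\sf M/\binom{\sf K}{\sf r} = (\sf K-\sf r)\,\sf A\,\sf M$. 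Interpreting this as a code for the cooperative channel, the achieved sum-rate satisfies $\sum_{\cT}\sum_{k\notin\cT} R_{k,\cT} = (\sf K-\sf r)\,\sf A\,\sf M/\sf n$, so dividing by $\log\sf P$ and taking $\sf P\to\infty$ shows that the sum-DoF value $(\sf K-\sf r)\,\sf A\,\sf M/(\sf n\log\sf P)$ is asymptotically achievable. Rearranging gives $\sf n/(\sf A\,\sf K\,\sf M)\cdot\log\sf P = (1-\sf r/\sf K)\cdot \log\sf P\cdot\sf n/((\sf K-\sf r)\,\sf A\,\sf M) $, and taking the limits in \eqref{eq:Delta} together with the supremum defining $\mathrm{SDoF}(\sf r)$ yields $\Delta(\sf r)\ge (1-\sf r/\sf K)/\mathrm{SDoF}(\sf r)$; choosing the shuffle scheme optimally gives one inequality.

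For the reverse direction I would start from a sequence of codes for the $\sf r$-fold cooperative channel achieving (arbitrarily close to) $\mathrm{SDoF}(\sf r)$, i.e., with $\sum_{\cT}\sum_{k\notin\cT}R_{k,\cT}/\log\sf P \to \mathrm{SDoF}(\sf r)$, and I would split the bits of each IVA $a_{k,p}$ across channel uses so that the per-bundle message $a_{k,\cT}$ carries the required $\sf A\cdot\sf M/\binom{\sf K}{\sf r}$ bits; because $a_{k,\cT}$ is available exactly at the transmitters in $\cT$, the cooperative-channel encoders respect the MapReduce encoding constraint \eqref{eq:comp_encoding}, the decoders respect \eqref{eq:comp_decoding}, and the power constraint \eqref{eq:power} is the same. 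The number of channel uses $\sf n$ needed then satisfies $\sf n \le (\sf K-\sf r)\,\sf A\,\sf M/(\log\sf P\cdot(\mathrm{SDoF}(\sf r)-\epsilon))$ up to vanishing terms, which after substitution into \eqref{eq:Delta} gives $\Delta(\sf r)\le (1-\sf r/\sf K)/\mathrm{SDoF}(\sf r)$. Combining the two inequalities proves \eqref{eq:DSDof}.

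The main obstacle, and the only place requiring genuine care, is handling the two nested limits in \eqref{eq:Delta} ($\liminf$ over $\sf A\to\infty$, then over $\sf P\to\infty$) against the single limit $\sf P\to\infty$ in the definition of $\mathrm{SDoF}(\sf r)$, together with the $\log\sf P$ rate normalization: one has to verify that a cooperative-channel code family indexed by $\sf P$ with vanishing error as $\sf n\to\infty$ can be turned into a shuffle-phase code family indexed by $(\sf A,\sf P)$ with vanishing error as $\sf A\to\infty$, and conversely, so that the $\liminf$'s and $\sup$ match up exactly rather than merely giving inequalities with slack. This is a standard but slightly delicate interchange-of-limits and block-length-matching argument; I would expect the cited connection in \cite{bi_normalized_2024} to carry it out, and I would mirror that treatment, taking $\sf A$ large relative to the cooperative code's block length and exploiting that the IVA bits are i.i.d. so that concatenation over many cooperative-channel blocks keeps the error probability vanishing.
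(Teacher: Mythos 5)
Your proposal is correct and follows the standard argument: under the combinatorial file assignment the shuffle phase is exactly an instance of the $\sf r$-fold cooperative channel, the bit-count $(\sf K-\sf r)\,\sf A\,\sf M$ and the normalization $\Delta\cdot\mathrm{SDoF}=1-\sf r/\sf K$ both check out, and you correctly flag the limit-matching between \eqref{eq:Delta} and the SDoF definition as the only delicate point. The paper itself gives no proof of this lemma and simply defers to \cite{bi_normalized_2024}, whose argument your proposal mirrors.
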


\section{Main results}

Our first result is an  achievable NDT-computation tradeoff  that improves over previous results for $\sf r = \lfloor( \sf K - 1 )/2\rfloor$.
\begin{theorem}[Achievable NDT-computation tradeoff]
	\label{thm:achievable}
	The NDT-computation tradeoff for $5 \leq \sf K \leq 15$\footnote{
		The restriction $\sf K \leq 15$ is due to the use of numerical verification.
		This choice does not suggest any inherent limitation of the proposed scheme for larger $\sf K$, instead, 
		The performance of the scheme has only been numerically validated for $\sf K \leq 15$. }
	and $\sf r = \lfloor (\sf K - 1)/2 \rfloor$ satisfies
	\begin{equation}\label{eqn:achievable}
		\sf \Delta(\sf r) \leq
		\begin{cases} 
			\dfrac{1}{\sf K}\left(1 - \dfrac{\sf r}{\sf K}\right)\left(1 + \dfrac{4}{(\sf K -2)^2}\right), & \sf K \text{ even}, \\[0.5ex]
			\dfrac{1}{\sf K}\left(1 - \dfrac{\sf r}{\sf K}\right)\left( 1 + \dfrac{1}{(\sf K - 1)(\sf K - 2)}  \right), & \sf K \text{ odd}.
		\end{cases}
	\end{equation}
\end{theorem}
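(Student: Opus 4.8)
The plan is to convert the claimed NDT bound into a lower bound on $\mathrm{SDoF}(\sf r)$ via Lemma~\ref{lem:E}. By \eqref{eq:DSDof}, inequality \eqref{eqn:achievable} is equivalent to $\mathrm{SDoF}(\sf r)\ge \sf K(\sf K-2)^2/\big((\sf K-2)^2+4\big)$ for even $\sf K$ and to $\mathrm{SDoF}(\sf r)\ge \sf K(\sf K-1)(\sf K-2)/\big((\sf K-1)(\sf K-2)+1\big)$ for odd $\sf K$. Since the channel coefficients are drawn from a continuous distribution, it suffices to exhibit, for almost every channel realization, a beamforming scheme over a (long) symbol extension that achieves these sum-DoF for the $\sf r$-fold cooperative channel induced by the combinatorial file assignment.

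Second, I would describe the IA \& ZF scheme. Under the combinatorial assignment, each message $a_{k,\mathcal T}$ ($\mathcal T\in[[\sf K]]^{\sf r}$, $k\notin\mathcal T$) is held cooperatively by the $\sf r$ transmitters in $\mathcal T$, is desired only at Node~$k$, is available as side information at the $\sf r$ Nodes of $\mathcal T$, and acts as interference only at the remaining $\sf K-\sf r-1$ Nodes outside $\mathcal T\cup\{k\}$. The $\sf r$-fold cooperation endows the beamformer of each message with $\sf r$ spatial dimensions per channel use, which I would spend on zero-forcing the message at $\sf r-1$ of its $\sf K-\sf r-1$ interfered Nodes; with $\sf r=\lfloor(\sf K-1)/2\rfloor$ this leaves $\sf K-2\sf r$ residual interfered Nodes per message, i.e.\ exactly one for odd $\sf K$ and two for even $\sf K$. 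The residual interference is then handled by Cadambe--Jafar-type asymptotic interference alignment \cite{cadambe_interference_2008} applied to the effective post-zero-forcing channels (the time variation of the channel over the symbol extension enabling this alignment), after fixing a symmetric residual-interference pattern within each bundle $\mathcal T$ (no message being its own victim, and each Node outside $\mathcal T$ being the residual victim of the same number of that bundle's messages). It then remains to count dimensions at a generic receiver: at Node~$k$ the $\binom{\sf K-1}{\sf r}$ desired streams must, together with the aligned residual interference, fit within the receive signal space, and optimizing the stream lengths under these per-receiver constraints gives the per-message DoF; summing over all $\binom{\sf K}{\sf r}(\sf K-\sf r)$ messages yields, in the limit of large symbol extensions, the target $\mathrm{SDoF}(\sf r)$ (for $\sf K=5$, for instance, this evaluates to $\mathrm{SDoF}(2)=60/13$).

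Third --- and this is the step I expect to be the main obstacle --- I would prove that the scheme is feasible for almost every channel, i.e.\ that at each receiver the matrix formed by the received desired beamformers together with a basis of the residual interference subspace has full rank. Because the zero-forcing constraints make the beamformers rational functions of the $H_{k,k'}$ in a mutually coupled way, the standard genericity argument for pure IA does not apply; instead one must show that a family of determinant polynomials in the channel coefficients is not identically zero, which amounts to establishing algebraic independence of the associated functions. I would carry this out analytically for $\sf K=5$ (e.g.\ by evaluating the relevant determinants at a carefully chosen channel realization, or via an explicit Jacobian computation), and, for $\sf K\in\{6,\dots,15\}$, certify the non-vanishing numerically --- a single nonzero evaluation at a random realization proving that the polynomial is not the zero polynomial, hence that the matrices have full rank almost surely. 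Finally I would verify the arithmetic that the achieved sum-DoF equals the stated expressions and invoke Lemma~\ref{lem:E} to recover \eqref{eqn:achievable}, noting that for odd $\sf K$ the resulting bound lies strictly below the one of \cite{bi_normalized_2024}.
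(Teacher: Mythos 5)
Your proposal follows essentially the same route as the paper: reduce to an SDoF lower bound via Lemma~\ref{lem:E}, use $\sf r$-fold cooperation to zero-force each message at $\sf r-1$ receivers leaving $\sf K-2\sf r$ residual victims, align the residual interference asymptotically, count dimensions per receiver, and certify feasibility via a not-identically-zero determinant/Jacobian argument (analytic for $\sf K=5$, numeric for $6\le\sf K\le 15$). The only piece you leave abstract is the concrete ``symmetric residual-interference pattern'' --- the paper's cyclic precoder assignment (Algorithm~\ref{alg:precoding_assignment}, including the extra adjacent-column assignment for odd $\sf K$) is precisely what makes each precoding matrix carry $\sf r$ (even $\sf K$) or $\sf K-2$ (odd $\sf K$) desired streams per receiver and hence yields the stated constants --- but your overall argument and the numbers you check (e.g.\ $60/13$ for $\sf K=5$) are consistent with it.
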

\begin{IEEEproof}
	The coding scheme and corresponding SDoF for the $\sf r$-fold cooperative channel is described in Section~\ref{sec:scheme}. Above tradeoff can then be obtained by applying Lemma \ref{lem:E}.
\end{IEEEproof}
As illustrated in Fig.~\ref{fig:compare}, for $\sf r = \lfloor (\sf K -1)/2 \rfloor$, the proposed scheme strictly outperforms the schemes in \cite{li_wireless_2019, bi_dof_2022, bi_normalized_2024} for all considered values of $\sf K$.
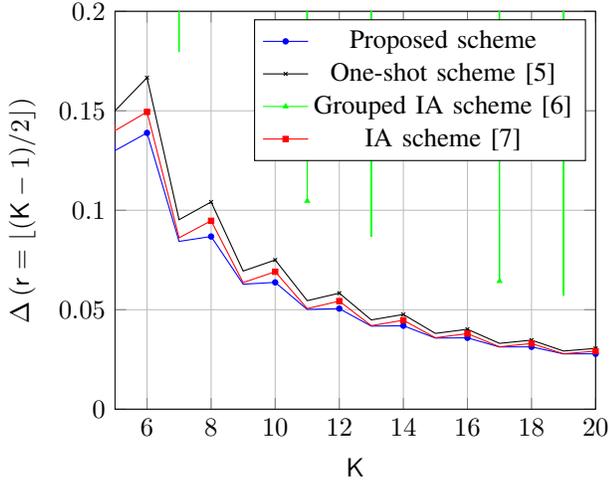
\begin{figure}[t]
	\centering
	\begin{tikzpicture}
		\begin{axis}[ width=0.9\linewidth,
			grid=both,
			xlabel={$\sf K$},
			xmin=5, xmax=20,
			ylabel={$\Delta \left( \sf r =\lfloor (\sf K-1)/2 \rfloor \right)$},
			scaled y ticks=false,
			yticklabel style={/pgf/number format/fixed},
			ymin=0, ymax=0.2,
			]
			\addplot[color=blue, mark=*, mark size=1pt, mark repeat=2, mark phase=2]table [x index=0, y index=1] {figures/delta_single.dat};
			\addlegendentry{Proposed scheme}
			
			\addplot[color=black, mark=x, mark size=1pt, mark repeat=2, mark phase=2]table [x index=0, y index=1] {figures/deltaub_oneshot.dat};
			\addlegendentry{One-shot scheme \cite{li_wireless_2019}}
			
			\addplot[color=green, mark=triangle*, mark size=1pt, mark repeat=2, mark phase=2]table [x index=0, y index=1] {figures/deltaub_grouped.dat};
			\addlegendentry{Grouped IA scheme \cite{bi_dof_2022}}
			
			\addplot[color=red, mark=square*, mark size=1pt, mark repeat=2, mark phase=2]table [x index=0, y index=1] {figures/deltaub_tit.dat};
			\addlegendentry{IA scheme \cite{bi_normalized_2024}}
		\end{axis}
	\end{tikzpicture}
	\caption{Comparison of the achieved NDT--computation tradeoff between the proposed scheme and the schemes in~\cite{li_wireless_2019, bi_dof_2022, bi_normalized_2024}.}
	\label{fig:compare}
\end{figure}

Above achievability result is obtained using a coding scheme based on IA and ZF. A key-characteristics of ZF is that nodes do not send independent signals, but correlate them so as to avoid that certain signals interfere at some of the nodes. Our second result implies that under the proposed combinatorial file assignment, such cooperative transmission is essential to achieve the best NDT-computation tradeoff. In  the following Theorem~\ref{thm:split_converse}, we provide a lower bound on the best NDT-computation tradeoff that is achievable using any kind of non-cooperative transmission. As we show, this lower bound lies above the NDT-computation tradeoff achieved by our IA and ZF scheme in Theorem~\ref{thm:achievable}.

Specifically, consider any MapReduce system with combinatorial file assignment and where each IVA $a_{q,p}$ is split into $\sf r$ equal length sub-messages $(a^q_{t, \mathcal{T}})_{t\in\mathcal{T}}$ where $\mathcal{T}$ is the set of nodes that have stored $a_{q,p}$, i.e., $\mathcal{T}=\{k\colon p \in \mathcal{M}_k\}$. Each node $t \in \cT$ then transmits only $a^q_{t,\mathcal{T}}$ but not the other submessages implying node $t$'s transmit signal $X_{t}^n$ is independent of the sub-messages $(a^q_{t', \mathcal{T}})_{t' \in\mathcal{T}\backslash\{t\}}$. In contrast, node $t$ is allowed to use its knowledge of all messages $(a^q_{t', \mathcal{T}})_{t'\in\mathcal{T}}$ during its decoding.  Notice that this set of strategies in particular includes  XOR-based inter-message coding as proposed in \cite{hachem_degrees_2018, li_fundamental_2018} where each XOR is independently multi-cast to a set of receivers that can remove the non-desired parts in the XOR. 

\begin{theorem}[Converse under  Non-Cooperative Transmissions]
	\label{thm:split_converse}
Under combinatorial file assignments and  non-cooperative transmissions of sub-IVAs, the achievable NDT-computation tradeoff is bounded as:
	\begin{equation}\label{eqn:split_converse}
		\sf \Delta(\sf r) \geq \dfrac{1}{\sf K}\left(1-\dfrac{\sf r}{\sf K}\right) \cdot \dfrac{(\sf K - 2)\sf r + \sf K - 1}{\sf r (\sf K - 1)}.
	\end{equation}
\end{theorem}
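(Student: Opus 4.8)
The plan is to invoke Lemma~\ref{lem:E} to convert the claim into a sum-DoF upper bound for the $\sf r$-fold cooperative channel under non-cooperative transmissions, namely
\begin{equation}\label{eq:ncsdof}
	\mathrm{SDoF}(\sf r)\;\le\;\frac{\sf K\,\sf r\,(\sf K-1)}{(\sf K-2)\,\sf r+\sf K-1}.
\end{equation}
The first, structural, observation is that since each sub-message $a^q_{t,\cT}$ is transmitted only by node~$t$, the sub-message sets carried by distinct nodes are disjoint; hence the transmit signals $\vect{X}_1,\ldots,\vect{X}_{\sf K}$ are mutually independent, and the shuffle channel degenerates into an ordinary $\sf K$-user single-antenna interference channel in which receiver~$q$ must decode all sub-messages $\{a^q_{t,\cT}\colon q\notin\cT,\ t\in\cT\}$ and knows a priori exactly the sub-messages whose storing set contains~$q$, i.e.\ $\{a^{q'}_{s,\cS}\colon q\in\cS\}$.

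The converse will be a genie-aided Fano argument in the regime $\sf A,\sf P\to\infty$. There is the elementary per-receiver bound: bounding the differential entropy of the scalar output $\vect{Y}_q$ against the noise entropy, after conditioning only on receiver~$q$'s (independent) side information, gives $\sum_{\cT\colon q\notin\cT} d_{q,\cT}\le 1$ for every $q$, where $d_{q,\cT}$ denotes the DoF of $a_{q,\cT}$; summed over $q$ this yields only $\mathrm{SDoF}\le\sf K$, short of \eqref{eq:ncsdof}. The improvement must come from a bound that additionally charges a sub-message to the single scalar stream of its (non-cooperating) transmitter and that exploits the regular side-information structure — each message $a_{q,\cT}$ is known to the $\sf r$ receivers in~$\cT$ and is pure interference at the $\sf K-\sf r-1$ receivers outside $\cT\cup\{q\}$. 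One natural route is to process the receivers in a chosen order and feed the later ones genie copies of earlier receivers' outputs (or of selected transmit signals), so that a group of receivers is forced to decode jointly, from a bounded number of scalar streams, both their own sub-messages and the interfering sub-messages that a non-cooperating transmitter cannot null out; by the symmetry of the combinatorial file assignment this couples the $d_{q,\cT}$ over a whole orbit of messages.

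Averaging such bounds over all nodes and all orderings and using the regularity of the assignment — each message is counted once at its intended receiver, at the $\sf r$ transmitters in $\cT$, and in the side information of the $\sf K-\sf r-1$ remaining receivers — should collapse everything to a single inequality in $d:=\mathrm{SDoF}(\sf r)$ whose solution is \eqref{eq:ncsdof}. The constants $(\sf K-2)\sf r$ and $\sf K-1$ in \eqref{eqn:split_converse} would then be exactly the aggregate transmitter/interference count and the per-receiver count produced by this bookkeeping.

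I expect the crux to be the middle step: crafting the genie so that non-cooperation is provably penalised, i.e.\ showing that a single-antenna transmitter cannot deliver more than roughly one stream plus a side-information ``credit'' worth of DoF even though each of its intended receivers also receives all the other (independent) transmit signals and holds a large structured side-information set. Two complications make this delicate: receiver~$q$ needs the $\sf r$ companion pieces $(a^q_{t,\cT})_{t\in\cT}$ jointly, from $\sf r$ distinct transmitters, so the genie must decouple them without rendering the bound vacuous; and the entropy/noise bookkeeping must be clean enough that the generic (continuous, bounded-support) channel assumption already precludes any alignment gain beyond what \eqref{eq:ncsdof} permits. A last, more routine, task is to verify that the linear program in $\{d_{q,\cT}\}$ assembled from these inequalities indeed attains $\tfrac{\sf K\sf r(\sf K-1)}{(\sf K-2)\sf r+\sf K-1}$, which amounts to exhibiting the extremal symmetric rate allocation together with a matching dual weighting.
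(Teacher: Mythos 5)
Your framing is the right one: Lemma~\ref{lem:E} reduces the claim to the sum-DoF bound $\mathrm{SDoF}(\mathsf{r})\le \mathsf{K}\mathsf{r}(\mathsf{K}-1)/((\mathsf{K}-2)\mathsf{r}+\mathsf{K}-1)$, the mutual independence of the transmit signals $\vect{X}_1,\ldots,\vect{X}_{\mathsf K}$ is indeed the property that must be exploited, and the closing step --- summing a per-pair ``one-DoF'' inequality over all ordered pairs of nodes and using the symmetry of the combinatorial assignment to count multiplicities --- is exactly how the paper finishes. But the proposal stops precisely at the point you yourself identify as the crux: you never specify \emph{which} set of sub-messages is bounded by a single DoF for a given pair, nor how non-cooperation is provably penalised, and without that lemma the averaging step has nothing to average. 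As written, this is a plan with the central argument missing, not a proof.

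Concretely, the paper's Lemma~\ref{lma:upper_bound_lemma} fixes an ordered pair $(t,j)$ and shows that the rates of $\mathcal{V}=\mathcal{V}^{\textnormal{Rx}}\cup\mathcal{V}^{\textnormal{Tx}}$ sum to at most one DoF, where $\mathcal{V}^{\textnormal{Rx}}$ collects \emph{all} sub-IVAs destined to receiver $j$ and $\mathcal{V}^{\textnormal{Tx}}$ is a nested, cyclically defined family of sub-IVAs sent by node $t$ to the remaining receivers $i$, with transmit sets forced to avoid $\{j,\dots,i\}$. Two mechanisms are needed that your sketch does not supply. First, after conditioning on every message outside $\mathcal{V}$, all transmit signals except $\vect{X}_t$ become deterministic (this is exactly where the independence of the $\vect{X}_k$ pays off), so each other output reduces to $\mat{H}_{i,t}\vect{X}_t+\vect{Z}_i$ and can be reconstructed from receiver $j$'s output up to a bounded noise-entropy penalty $h(\vect{Z}')-h(\vect{Z})$; granting all outputs $\vect{Y}_{\mathcal{F}}$ as a genie therefore costs $o(\log\mathsf{P})$ rather than additional DoF. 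Second, the nesting of the sets $\mathcal{V}^{\textnormal{Tx},i}$ is what makes the successive Fano argument work (each node $i$, in turn, already holds through the conditioning every sub-IVA it stores and can decode $\mathcal{V}^{\textnormal{Tx},i}$ from $\vect{Y}_i$), and it is what produces the count $|\mathcal{V}^{\textnormal{Tx}}|=\binom{\mathsf{K}-2}{\mathsf{r}}$ via the hockey-stick identity; combined with $|\mathcal{V}^{\textnormal{Rx}}|=\mathsf{r}\binom{\mathsf{K}-1}{\mathsf{r}}$ and $|\mathcal{V}_{\textnormal{total}}|=\mathsf{r}\binom{\mathsf{K}}{\mathsf{r}}(\mathsf{K}-\mathsf{r})$ this yields exactly the claimed constant. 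Finally, no linear program needs to be solved: by symmetry every rate appears the same number of times in the $\mathsf{K}(\mathsf{K}-1)$ bounds, giving directly $\mathrm{SDoF}\le|\mathcal{V}_{\textnormal{total}}|/|\mathcal{V}|$.
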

\begin{IEEEproof}
	The converse proof for the $r$-fold cooperative channel with message-splitting approach is given in Section~\ref{sec:converse}. The tradeoff can be obtain by applying Lemma~\ref{lem:E}.
\end{IEEEproof}

Then, we obtain the following corollary.
\begin{corollary}
	For $\sf K \geq 5$ and $\sf r = \lfloor (\sf K - 1)/2 \rfloor$  the bound in \eqref{eqn:achievable} is strictly smaller than the bound in \eqref{eqn:split_converse}.
\end{corollary}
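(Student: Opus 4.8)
The plan is to cancel the common positive prefactor shared by the two bounds and reduce the statement to an elementary comparison of rational functions of $\mathsf{K}$, handled separately according to the parity of $\mathsf{K}$.

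First I would note that both \eqref{eqn:achievable} and \eqref{eqn:split_converse} carry the factor $\frac{1}{\mathsf{K}}\bigl(1-\frac{\mathsf{r}}{\mathsf{K}}\bigr)$, which is strictly positive since $\mathsf{r}<\mathsf{K}$; hence it suffices to show that the bracketed multiplier in \eqref{eqn:achievable} is strictly smaller than the multiplier $\frac{(\mathsf{K}-2)\mathsf{r}+\mathsf{K}-1}{\mathsf{r}(\mathsf{K}-1)}$ appearing in \eqref{eqn:split_converse}. I would rewrite the latter in the more transparent form
\begin{equation}
\frac{(\mathsf{K}-2)\mathsf{r}+\mathsf{K}-1}{\mathsf{r}(\mathsf{K}-1)}
= 1-\frac{1}{\mathsf{K}-1}+\frac{1}{\mathsf{r}},
\end{equation}
which isolates the $\mathsf{r}$-dependence cleanly.

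Next I would substitute $\mathsf{r}=\lfloor(\mathsf{K}-1)/2\rfloor$. For odd $\mathsf{K}$ one has $\mathsf{r}=(\mathsf{K}-1)/2$, so $1/\mathsf{r}=2/(\mathsf{K}-1)$ and the converse multiplier equals $1+\frac{1}{\mathsf{K}-1}$; the desired inequality then collapses to $\frac{1}{(\mathsf{K}-1)(\mathsf{K}-2)}<\frac{1}{\mathsf{K}-1}$, i.e.\ $\mathsf{K}>3$, which is true. For even $\mathsf{K}$ one has $\mathsf{r}=(\mathsf{K}-2)/2$, so $1/\mathsf{r}=2/(\mathsf{K}-2)$; writing $n=\mathsf{K}-2$ and clearing the positive common denominator $n^{2}(n+1)$, the inequality $\frac{4}{(\mathsf{K}-2)^{2}}<-\frac{1}{\mathsf{K}-1}+\frac{2}{\mathsf{K}-2}$ becomes $n^{2}-2n-4>0$, i.e.\ $n>1+\sqrt{5}$, which holds since an even $\mathsf{K}\ge 6$ gives $n=\mathsf{K}-2\ge 4$.

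There is no genuine obstacle here: the argument is a short computation. The only points deserving attention are to keep the two parity cases separate, because $\lfloor(\mathsf{K}-1)/2\rfloor$ has the two distinct closed forms used above, and to verify that the smallest admissible cases already obey the strict inequalities — $\mathsf{K}=5$ gives $1<\mathsf{K}-2$, and $\mathsf{K}=6$ gives $n^{2}-2n-4=4>0$ — with the margins only widening as $\mathsf{K}$ grows, which incidentally also quantifies by how much the proposed cooperative scheme beats every non-cooperative scheme.
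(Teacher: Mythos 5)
Your proof is correct and follows essentially the same route as the paper's: split on the parity of $\mathsf{K}$, substitute the closed form of $\mathsf{r}=\lfloor(\mathsf{K}-1)/2\rfloor$, and compare the resulting multipliers after cancelling the common positive prefactor. You simply carry out the elementary algebra (e.g., reducing the even case to $n^2-2n-4>0$) that the paper asserts without detail, which makes your version slightly more complete but not different in substance.
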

\begin{IEEEproof}
	For ${\sf K}$ even and $\sf r =  \lfloor (\sf K - 1)/2 \rfloor=(\sf K -2)/2$, the expression in 
	\eqref{eqn:split_converse} evaluates to:
	\begin{equation} \dfrac{1}{\sf K} \left(1 - \dfrac{\sf r}{\sf K}\right)\cdot\dfrac{(\sf K-2)^2 + 2(\sf K-1)}{(\sf K-2)(\sf K-1)}. 
	\end{equation}
	which is strictly larger than the achievable bound in \eqref{eqn:achievable} for all even $\sf K \geq 6$.
	Similarly, for  $\sf K$ odd and  $\sf r =  \lfloor (\sf K - 1)/2 \rfloor= (\sf K-1)/2$, the bound in \eqref{eqn:split_converse} simplifies to 
	\begin{equation}\dfrac{1}{\sf K (\sf K -1)}\left(1 - \dfrac{\sf r}{\sf K}\right) .
		\end{equation}
	The obtained bound is again larger than the achievable bound in \eqref{eqn:achievable} for all $\sf K \geq 3$. 
\end{IEEEproof}

\section{The New IA \& ZF Scheme} \label{sec:scheme}

\subsection{A scheme for $\sf r = \lfloor (\sf K-1)/2\rfloor$}
Each message is cooperatively transmitted by a size-$\r$ set of nodes $\mathcal{T}$ so that it is received at a given node $k\in[\K]\backslash \mathcal{T}$ while  zero-forced at a group $\mathcal{S}$ of  $\r-1$ nodes in $[\K]\backslash\{  \mathcal{T}\cup \{k\}\}$.  There is a set $\cL \subset [\K] \backslash  \{\mathcal{S} \cup  \mathcal{T}\cup \{k\}\}$ of $\K - (\r+ \r-1+ 1) = \K - 2\r$ nodes where the signal is experienced as interference.
We choose to precode all messages that cause interference at a given set of Nodes $\mathcal{L}$ by the same precoding matrix $\mat{U}_{\cL}$, for $\mathcal{L} \in [\K]$, with $|\mathcal{L}| = \K-2\r$. 
We want to construct the  precoding matrix $\mat{U}_\ell$ so that all interferences at the same set of nodes $\cL$ align, thus leaving the remaining space for signaling dimensions. 
To summarize, if we use precoding matrix $\mat{U}_{\cL}$ for the transmission of a message from group $\mathcal{T}$ to Node $k$, then we zero-force this signal at all nodes in $[\K]\backslash \{  \mathcal{T} \cup \cL \cup\{k\}\}$. We ensure that this signal is aligned with all other interference signals precoded with $\mat{U}_{\cL}$ at the nodes $\ell \in \cL$, which are the only nodes where it causes interference.

Based on the above design principle, the key design choice in our scheme is the assignment of a precoding matrix to each message. Equivalently, for each message, we select the set of nodes $\mathcal{L}$ at which this message causes interference, and assign the corresponding precoding matrix $\mat{U}_{\mathcal{L}}$. This assignment fully determines both the interference alignment structure and the zero-forcing constraints of the scheme.

For the case $\K \geq 5$, the assignment of precoding matrices follows a cyclic construction, as summarized in Algorithm~\ref{alg:precoding_assignment}. 

\begin{algorithm}[b]
	\caption{Assignment of precoding matrices for $\sf K \geq 5$}
	\label{alg:precoding_assignment}
	\textit{Throughout Algorithm~\ref{alg:precoding_assignment}, all indices are interpreted
		according to the natural cyclic order of the set $[\K]\setminus\cL$.
		In particular, the notions of successor (e.g., $k+1$) and consecutive indices
		are defined according to this cyclic order, rather than the natural integer
		ordering.}
	\begin{algorithmic}[1]
		\Require Number of nodes $\sf K$, cooperation parameter $\r$, index set $\cL$
		
		\ForAll{$k \in [\K] \setminus \cL$}
		
		\State Fix $\r-1$ consecutive nodes
		\Statex \hspace{1em}
		$\{ k-\r+1, \ldots, k-1 \} \subseteq [\K] \setminus \cL$.
		
		\ForAll{$t \in [\K] \setminus \big( \{ k-\r+1, \ldots, k \} \cup \cL \big)$}
		\State Form the set $\cT \gets \{ k-\r+1, \ldots, k-1, t \}$
		\State Assign $\mat{U}_{\cL}$ to row $\cT$ and column $k$
		
		\If{$\sf K$ is odd \textbf{and} $k+1 \notin \cT$}
		\State Assign $\mat{U}_{\cL}$ to row $\cT$ and column $k+1$
		\EndIf
		\EndFor
		\EndFor
	\end{algorithmic}
\end{algorithm}
Notice that the size of $\cL$ is $\K - 2\r$, so we select exactly $\r$ different sets $\cT$ in column $k$ when $\sf K$ is even. We select $2\r - 1 = \sf K - 2$ different sets $\cT$ when $\sf K$ is odd.

\lo{
For example, as depicted in Table~\ref{table:K6r2}, for $\K=6$, $\r=2$, $\cL = \{5,6\}$ and $k=2$ the following sets $\mathcal{T}$ send a message to Node $2$ using precoding matrix $\mat{U}_{5,6}$: 
\begin{IEEEeqnarray}{rCl}\label{eq:T1}
	\mathcal{T} &\in& \left\{ \{1, t\}\colon t\in [6] \backslash \{1,2,5,6\} \right\} = \{ \{1, 3\}, \{1,4\}\},
\end{IEEEeqnarray}
where here we associated the index $1$ with $k-1$ and $t$ runs over the remaining set $[6] \backslash \{1,2,5,6\}$. Since $\sf K=6$ is even, no additional assignment across adjacent columns is required according to Algorithm~\ref{alg:precoding_assignment}. 
For clarity, we focus on the assignment of $\mat{U}_{5,6}$ at Node~$2$. The analysis for other precoding matrices and nodes follows analogously by symmetry.
\begin{table}[ht]
	\caption{Choice of precoding matrices in our scheme for $\sf K=6$ and $r=2$. Each signal that is precoded with matrix $\mat{U}_{5,6}$ is zero-forced at the unique Node $i \in [\sf K] \backslash (\cT \cup \{k, 5, 6\})$.}
	\centering
	\begin{tabular}{c  ||  c | c | c | c | c | c }
		$\mathcal{T} \; \backslash \; k $ &1 & 2 & 3 & 4 & 5 & 6 \\ [0.5ex] 
		\hline\hline 
		$\{1,2\}$ & x   & x   & $\mat{U}_{\{5,6\}}$ & o   & o   & o \\
		$\{1,3\}$ & x   & $\mat{U}_{\{5,6\}}$ & x   & $\mat{U}_{\{5,6\}}$ & o   & o \\
		$\{1,4\}$ & x   & $\mat{U}_{\{5,6\}}$ & o   & x   & o   & o \\
		$\{1,5\}$ & x   & o   & o   & o   & x   & o \\
		$\{1,6\}$ & x   & o   & o   & o   & o   & x \\
		$\{2,3\}$ & o   & x   & x   & $\mat{U}_{\{5,6\}}$ & o   & o \\
		$\{2,4\}$ & $\mat{U}_{\{5,6\}}$ & x   & $\mat{U}_{\{5,6\}}$ & x   & o   & o \\
		$\{2,5\}$ & o   & x   & o   & o   & x   & o \\
		$\{2,6\}$ & o   & x   & o   & o   & o   & x \\
		$\{3,4\}$ & $\mat{U}_{\{5,6\}}$ & o   & x   & x   & o   & o \\
		$\{3,5\}$ & o   & o   & x   & o   & x   & o \\
		$\{3,6\}$ & o   & o   & x   & o   & o   & x \\
		$\{4,5\}$ & o   & o   & o   & x   & x   & o \\
		$\{4,6\}$ & o   & o   & o   & x   & o   & x \\
		$\{5,6\}$ & o   & o   & o   & o   & x   & x \\
	\end{tabular}
	\label{table:K6r2}
\end{table}
}

\begin{table}[ht]
	\caption{Choice of precoding matrices in our scheme for $\sf K=5$ and $r=2$. Each signal that is precoded with matrix $\mat{U}_5$ is zero-forced at the unique Node $i\in[\sf K] \backslash (\cT \cup \{k, 5\})$.}
	\centering
	\begin{tabular}{c  ||  c  | c | c| c |c  }
		$\mathcal{T} \; \backslash \; k $ &1 & 2 & 3 & 4 & 5 \\ [0.5ex] 
		\hline\hline 
		$\{1,2\}$& x & x &  \Ufi  &  \Ufi & o \\
		$\{1,3\}$& x & \Ufi & x& \Ufi & o \\
		$\{1,4\}$& x & \Ufi & \Ufi  & x & o \\
		$\{1,5\}$& x & o & o  & o  &x  \\
		$\{2,3\}$& \Ufi &x &  x & \Ufi & o \\
		$\{2,4\}$&  \Ufi & x & \Ufi & x & o \\
		$\{2,5\}$& o&  x& o  &o &x \\
		$\{3,4\}$& \Ufi & \Ufi &x&x &o \\
		$\{3,5\}$&  o &  o &x& o  &x \\
		$\{4,5\}$& o   &  o & o &x &x  \\
	\end{tabular}
	\label{table:K5r2_first}
\end{table}
\lo{
Another example is given in Table \ref{table:K5r2_first} for $\sf K=5$, $\sf r=2$ and \Ufi. 
Consider again $k=2$. The same two transmit sets as in~\eqref{eq:T1} are obtained.
However, since $\sf K$ is odd, an additional assignment is required. Specifically, the precoding matrix $\mat{U}_{\{5\}}$ must also be assigned to the entry corresponding to $\mathcal{T}=\{1,4\}$ and $k=3$, as the transmit set $\{1,4\}$ does not include $3$.
}
\sh{
For example, as depicted in Table~\ref{table:K5r2_first}, for $\K=5$, $\r=2$, $\cL = \{5\}$ and $k=2$, the following sets $\mathcal{T}$ send a message to Node $2$ using precoding matrix $\mat{U}_{5}$: 
\begin{IEEEeqnarray}{rCl}\label{eq:T1}
	\mathcal{T} &\in& \left\{ \{1, t\}\colon t\in [5] \backslash \{1,2,5\} \right\} = \{ \{1, 3\}, \{1,4\}\},
\end{IEEEeqnarray}
where here we associated the index $1$ with $k-1$ and $t$ runs over the remaining set $[5] \backslash \{1,2,5\}$.
Since $\sf K$ is odd, an additional assignment is required according to Algorithm~\ref{alg:precoding_assignment}. Specifically, the precoding matrix $\mat{U}_{5}$ must also be assigned to the entry corresponding to $\mathcal{T}=\{1,4\}$ and $k'=k+1=3$, as the transmit set $\{1,4\}$ does not include $3$. For clarity, we focus on the assignment of $\mat{U}_{5}$ with $k=2$. The analysis for other precoding matrices and columns follows analogously by symmetry.

Another example is given in~\cite{arxiv2026} for $\sf K=6$, $\sf r=2$. 
}

\subsection{Analysis of SDoF}
If all channel coefficients after zero-forcing  that multiply the same precoding matrix are algebraically independent, an IA coding scheme can be constructed by following the framework in \cite{bi_normalized_2024} and applying the precoding matrix assignment method in Algorithm~\ref{alg:precoding_assignment}. 
\lo{For the case $\sf K=5$, the desired algebraic independence is proved analytically in the next subsection.}
\sh{For $\sf K=5$, the desired algebraic independence is proved analytically in the arXiv version~\cite{arxiv2026}.}
For larger  values of $\sf K$  however, the analysis becomes considerably more complicated and we verified algebraic independence  directly by computing the Jacobian matrix using \textsc{Matlab}\footnote{The code is at: https://github.com/yue-bi/wdc-jacobian-verification.}. This verification has been carried out for all  $\sf K \in\{6,\ldots, 15\}$. This range is chosen due to the computational complexity of the symbolic Jacobian computation.

To evaluate the SDoF achieved by our scheme, we notice that the  DoF per node is given by the ratio between the dimension of the desired signal subspace and the dimension of the desired signal subspace and  interference subspace. For sufficiently large message lengths, the ratio between the dimensions of the desired signal subspace and the interference subspace asymptotically equals the ratio between the number of codewords intended for a given Node and the number of distinct precoding matrices in the interference space.

As mentioned earlier, each column is filled with $\sf r$ matrices when $\sf K$ is even, and with $2\sf r -1 =\sf K-2$ matrices when $\sf K$ is odd. This implies that for a node~$k \notin \mathcal{L}$, the precoding matrix $\mat{U}_{\cL}$ encodes $\sf r$ messages intended to Node~$k$ when $\sf K$ is even, and it encodes $\sf K -2$ messages intended to Node~$k$ when $\sf K$ is odd. 
Since $|\cL| = \sf K - 2\sf r$, the number of precoding matrices that encode a message intended for Node~$k$ equals $\binom{\sf K-1}{\sf K - 2\sf r}.$
Meanwhile, Node $k$ is only interfered by codewords that are premultiplied by $\mat{U}_{\cL}$ with $k \in \cL$. The number of interfering precoding matrices is $\binom{\K-1}{\K - 2\r-1}$. 
Therefore, for $\sf r = \lfloor( \sf K-1)/2 \rfloor$, the following DoF is achievable to each Node $k$ when we let the length of message tends to infinity:
\begin{IEEEeqnarray}{c}
	\begin{cases} \label{eq:diff}
		\frac{\binom{\sf K-1}{\sf K - 2\sf r} \sf r}{\binom{\sf K-1}{\sf K - 2\sf r} \sf r + \binom{\K-1}{\K - 2\r-1}} = \frac{(\sf K - 2)^2}{(\sf K - 2)^2 + 4} ,& \K \text{ even,} \\[0.5ex]
		\frac{\binom{\sf K-1}{\sf K - 2\sf r} (\sf K-2)}{\binom{\sf K-1}{\sf K - 2\sf r} (\sf K -2) + \binom{\K-1}{\K - 2\r-1}} = \frac{(\sf K - 1)(\sf K - 2)}{(\sf K - 1)(\sf K - 2) + 1} ,& \K \text{ odd.}
	\end{cases}
\end{IEEEeqnarray}

The achievable SDoF can be obtained by mutiplying the above results by $\sf K$, as the scheme is symmetric.

\lo{
\subsection{Independence of channel coefficients for $\sf K=5$}
We now specify the assignment of the precoding matrices for the case $\sf K=5$ and $\sf r=2$. Table~\ref{table:K5r2_first} illustrates the assignment associated with the precoding matrix $\mat{U}_5$. We focus on the analysis of $\mat{U}_5$ in the following. The treatment of the other precoding matrices follows analogously due to the symmetry of the system.

\subsubsection{Encoding}
To highlight the role of the precoding matrix $\mat{U}_5$, we explicitly write only the components of the transmitted signals that are premultiplied by $\mat{U}_5$ in Table~\ref{table:K5r2_first}, while the remaining terms are omitted for brevity.

We encode each message $a_{k, \cT}$ into a length-$\eta^{\Gamma}$ codeword $\vect{b}_{k, \cT}$, where $\eta$ is a fixed integer and $\Gamma$ is the number of channel coefficients premultiplying $\mat{U}_5$.
Each codeword that is premultiplied by $\mat{U}_5$ causes interference at Node~$5$. The applied zero-forcing for codeword $\vect{b}_{k,\cT}$ is 
\begin{equation}
	i \in [\sf K]\setminus(\cT \cup \{k, 5\}),
\end{equation}
which is
\begin{itemize}
	\item Zero-forced at Node~1: $\vect{b}_{4,(2,3)}$, $\vect{b}_{3,(2,4)}$, $\vect{b}_{2,(3,4)}$;
	\item Zero-forced at Node~2: $\vect{b}_{4,(1,3)}$, $\vect{b}_{1,(3,4)}$, $\vect{b}_{3,(1,4)}$;
	\item Zero-forced at Node~3: $\vect{b}_{2,(1,4)}$, $\vect{b}_{1,(2,4)}$,  $\vect{b}_{4,(1,4)}$;
	\item Zero-forced at Node~4: $\vect{b}_{3,(1,2)}$, $\vect{b}_{2,(1,3)}$, $\vect{b}_{1,(2,3)}$.
\end{itemize}

Define $\cB_5$ as all index pairs $(k, \cT)$ of codewords premultiplied by $\mat{U}_5$. The $\mat{U}_5$-related components of the transmitted signal at Node~$p$ can be expressed as
\begin{IEEEeqnarray}{rCl}
	\vect{X}_p & = & \sum_{(k, \cT) \in \cB_5, p \in \cT} \mat{V}_{k, \cT}^{p} \mat{U}_5 \vect{b}_{k, \cT} + \dots, 
\end{IEEEeqnarray}
where $\mat{V}_{k, \cT}^{p} = \text{diag} \left(v_{k, \cT}^{p}(1), \dots, v_{k, \cT}^{p}(\sf n) \right)$ is the ZF matrix for codeword $\vect{b}_{k, \cT}$ at Node~$p$. 

Correspondingly, the $\mat{U}_5$-related components of the received signal at Node~$p'$  are given by
\begin{IEEEeqnarray}{rCl}
	\vect{Y}_{p'} & = & \sum_{(k, \cT) \in \cB_5, p' \notin \cT} \mat{H}_{p', \cT} \mat{V}_{k, \cT} \mat{U}_5 \vect{b}_{k, \cT} + \dots + \vect{Z}_{p'},  \IEEEeqnarraynumspace
\end{IEEEeqnarray}
where $$\mat{H}_{p', \cT} = [ \mat{H}_{p', p} ]_{p \in \cT} \quad \textnormal{and} \quad \mat{V}_{k, \cT} =  \left( [\mat{V}_{k, \cT}^{p}]_{p \in \cT} \right)^T.$$
To implement ZF, we need
\begin{equation}
	\mat{H}_{i, \cT} \mat{V}_{k, \cT} = 0, \quad \forall i \in [\sf K]\setminus(\cT \cup \{k, 5\}).
\end{equation}
Notice that both  $\mat{H}$- and $\mat{V}$-matrices are diagonal, and moreover, $|\cT| = 2$ and $|[\sf K]\setminus(\cT \cup \{k, 5\})| = 1$ for $\sf K=5$ and $\sf r=2$. 

For  $\cT = (p1, p2)$, we choose 
\begin{IEEEeqnarray}{rCl}
	v_{k, \cT}(t) &=& [v_{k, \cT}^{p1}(t), v_{k, \cT}^{p2}(t)]^T = s_{k, \cT}(t)[-h_{i, p2}(t), h_{i, p1}(t)]^T, \nonumber \\
	&& \forall t \in [\sf n], i = [\sf K]\setminus(\cT \cup \{k, 5\}),
\end{IEEEeqnarray}
where $s_{k, \cT}(t)$ is a random number independent of all others. Denote $h_{i, \cT}(t) = [h_{i, p1}(t), h_{i, p2}(t)]$ With this choice, we always have the inner product $<v_{k, \cT}(t), h_{i, \cT}(t)>=0$.

Taking into account the ZF, the  received signal at Node $p'$ precoded by $\mat{U}_5$ can be rewritten as 
\begin{IEEEeqnarray}{rCl}
	\vect{Y}_{p'} & = & \sum_{(k, \cT) \in \cB_5, k \notin \cT} \mat{G}^{(p')}_{k, \cT} \mat{U}_5 \vect{b}_{k, \cT} + \dots + \vect{V}_k, 
\end{IEEEeqnarray}
where
\begin{equation}
	\mat{G}^{(p')}_{k, \cT} \triangleq \mat{H}_{p', \cT} \mat{V}_{k, \cT}.
\end{equation}
Also notice that matrices are diagonal, i.e. 
\begin{equation}
	\mat{G}^{(p')}_{k, \cT} = \text{diag}(g^{(p')}_{k, \cT}(1), \dots g^{(p')}_{k, \cT}(\sf n)),
\end{equation}
and for $t \in [\sf n]$, the $t$-th diagonal entry is given by
\begin{equation}
	g^{(p')}_{k, \cT}(t) = s_{k, \cT}(t) \mu_{(i,p'),\cT}(t), \quad i = [\sf K]\setminus(\cT \cup \{k, 5\}),
\end{equation}
\begin{equation}
	\mu_{(i,p'),\cT}(t) \triangleq \begin{vmatrix}
		h_{i,p1}(t) & h_{i,p2}(t) \\
		h_{p',p1}(t) & h_{p',p2}(t)
	\end{vmatrix}.
\end{equation}

After ZF we are left with channel coefficients $g^{(p')}_{k, \cT}(t)$, where $p' \in \{k, 5\}$. Notice that for $p' = k$ the channel coefficient $g^{(k)}_{k, \cT}(t)$  premutiplies a useful codeword for Node~$k$. For  $p' \neq k$  the  channel coefficient $g^{(p')}_{k, \cT}(t)$ premultiplies an interfering codeword.

\subsubsection{Independence and Calculation of Jacobian matrix}
We wish to verify the algebraic independence among all channel coefficient $\mat{G}_{k,\mathcal{T}}^{(p')}$ that premultiply the matrix $\mat{U}_5$. Since the coefficients across different times $t \in [\sf n]$  are independent, we only need to analyze the independence between coefficients $g_{k,\mathcal{T}}^{(p')}(t)$ for a fixed $t$. For simplicity, we omit the index t.

We thus consider the Jacobian matrix
\begin{equation}
	\mat{J} = \left( \frac{\partial \vect{g}}{\partial (\vect{s}, \vect{h})} 	\right),
\end{equation}
where $\vect{g}$ denotes the vector of all coefficients $g_{k,\mathcal{T}}^{(p')}$, $\vect{s}$ denotes the vector of random scalars $s_{k, \mathcal{T}}$, and $\vect{h}$ denotes the original channel coefficients.

We partition the Jacobian matrix according to the variables $(\vect{s},\vect{h})$ and the nature of the coefficients as
\[
\mat{J}
=
\begin{pmatrix}
	\displaystyle \frac{\partial \vect{g}_{\mathrm{intf}}}{\partial \vect{s}}
	&
	\displaystyle \frac{\partial \vect{g}_{\mathrm{intf}}}{\partial \vect{h}}
	\\[1ex]
	\displaystyle \frac{\partial \vect{g}_{\mathrm{use}}}{\partial \vect{s}}
	&
	\displaystyle \frac{\partial \vect{g}_{\mathrm{use}}}{\partial \vect{h}}
\end{pmatrix},
\]
where $\vect{g}_{\mathrm{intf}}$ and $\vect{g}_{\mathrm{use}}$ denote the interference-related and useful coefficients, respectively.

A key observation is that the two blocks in the first column,
$\frac{\partial \vect{g}_{\mathrm{intf}}}{\partial \vect{s}}$ and
$\frac{\partial \vect{g}_{\mathrm{use}}}{\partial \vect{s}}$,
exhibit a diagonal structure.  Indeed, each scalar $s_{k,\mathcal{T}}$ appears in exactly two coefficients generated by the same message. For instance, the coefficients
$g_{3,(1,2)}^{(3)}$ and $g_{3,(1,2)}^{(5)}$ are the only terms containing the factor $s_{3,(1,2)}$, and their partial derivatives satisfy
\[
\frac{\partial g_{3,(1,2)}^{(3)}}{\partial s_{3,(1,2)}}
=
\mu_{(3,4)(1,2)},
\qquad
\frac{\partial g_{3,(1,2)}^{(5)}}{\partial s_{3,(1,2)}}
=
\mu_{(5,4)(1,2)} .
\]

Exploiting this structure, we apply Gaussian elimination to eliminate the lower-left block
$\frac{\partial \vect{g}_{\mathrm{use}}}{\partial \vect{s}}$.
Specifically, we perform the row operation
\[
\mat{J}\!\left[g_{3,(1,2)}^{(3)},:\right]
\leftarrow
\mat{J}\!\left[g_{3,(1,2)}^{(3)},:\right]
-
\frac{\mu_{(3,4)(1,2)}}{\mu_{(5,4)(1,2)}}
\mat{J}\!\left[g_{3,(1,2)}^{(5)},:\right],
\]
which sets the entry $\mat{J}[g_{3,(1,2)}^{(3)}, s_{3,(1,2)}]$ to zero. As a result, the column corresponding to $s_{3,(1,2)}$ contains a single nonzero entry, located in the row associated with $g_{3,(1,2)}^{(5)}$.

Repeating this Gaussian elimination procedure for all pairs of coefficients sharing the same subscript $(k,\mathcal{T})$, the Jacobian matrix is transformed into an upper block-triangular form
$$
\mat{J} \sim 
\begin{pmatrix}
	\displaystyle \frac{\partial \vect{g}_{\mathrm{intf}}}{\partial \vect{s}}
	&
	\displaystyle \frac{\partial \vect{g}_{\mathrm{intf}}}{\partial \vect{h}}
	\\[1ex]
	\displaystyle \mat{0}
	&
	\displaystyle \mat{J}_h
\end{pmatrix}.$$ 
Consequently, after Gaussian elimination, the rank of the full Jacobian matrix is determined by the remaining lower-right block $\mat{J}_h$.

We begin by specifying the grouping of the coefficients $g$ and the channel
coefficients $h$, which induces a natural block structure in the Jacobian
matrix.
The coefficients $g_{k,\cT}^{(k)}$ are ordered and partitioned into four groups. Specifically, the groups are defined as
\begin{align*}
	\vect{G}_1 &=
	( g_{1,(2,3)}^{(1)},\ g_{2,(3,4)}^{(2)},\ g_{3,(1,4)}^{(3)},\ g_{4,(1,2)}^{(4)} ),\\
	\vect{G}_2 &=
	( g_{1,(3,4)}^{(1)},\ g_{2,(1,4)}^{(2)},\ g_{3,(1,2)}^{(3)},\ g_{4,(2,3)}^{(4)} ),\\
	\vect{G}_3 &=
	( g_{2,(1,3)}^{(2)},\ g_{4,(1,3)}^{(4)} ),\\
	\vect{G}_4 &=
	( g_{1,(2,4)}^{(1)},\ g_{3,(2,4)}^{(3)} ).
\end{align*}
The channel coefficients are grouped into four vectors as well
\begin{align*}
	\vect{h}_1 &= (h_{12}, h_{23}, h_{34}, h_{41}),\\
	\vect{h}_2 &= (h_{13}, h_{24}, h_{31}, h_{42}),\\
	\vect{h}_3 &= (h_{21}, h_{43}),\\
	\vect{h}_4 &= (h_{32}, h_{14}).
\end{align*}
According to the grouping of coefficients $g$ and channel variables $h$, the matrix $\mat{J}_h$ is partitioned into blocks
\begin{equation}
	\mat{J}_h = \left[\mat{J}_{i,j}\right]_{i,j \in \{1,2,3,4\}},
\end{equation}
where $\mat{J}_{i,j} \triangleq \partial \vect{G}_i / \partial \vect{h}_j.$

This grouping is designed to align the dependence structure between the coefficients $g$ and the channel variables $h$, so that the Jacobian matrix admits a nearly block upper-triangular structure after suitable row and column permutations. Moreover, each diagonal block is sparse and has nonzero entries only on the main diagonal and on the immediately right-adjacent positions, with all other entries being zero. This structure enables a blockwise analysis of the Jacobian determinant via its diagonal blocks. 

We first focus on the diagonal block
\[
\mat{J}_{22} = \frac{\partial \vect{G}_2}{\partial \vect{h}_2},
\]
where
\[
\vect{G}_2
=
\{ g_{1,(3,4)}^{(1)},\ g_{2,(1,4)}^{(2)},\ g_{3,(1,2)}^{(3)},\ g_{4,(2,3)}^{(4)} \}.
\]
Each coefficient $g_{k,\cT}^{(k)}$ is a scaled $2\times2$ minor of the channel
matrix and therefore depends on exactly two channel coefficients.
Consequently, in $\mat{J}_{22}$, each column contains exactly two nonzero
entries, which appear in the same two rows.
The corresponding Jacobian block is written in \eqref{eq:J22}.
\begin{figure*}[t]
	\begin{equation} \label{eq:J22}
		\mat{J}_{22} = 
		\begin{array}{c|cccc} & h_{13} & h_{24} & h_{31} & h_{42} \\\hline g_{1,(34)}^{(1)} & h_{24} & h_{13} - \dfrac{\mu_{(12)(34)}}{\mu_{(52)(34)}}\,h_{53} & 0 & 0 \\
			g_{2,(14)}^{(2)} & 0 & h_{31} & h_{24} - \dfrac{\mu_{(23)(14)}}{\mu_{(53)(14)}}\,h_{54} & 0 \\
			g_{3,(12)}^{(3)} & 0 & 0 & h_{42} & h_{31} - \dfrac{\mu_{(34)(12)}}{\mu_{(54)(12)}}\,h_{51} \\
			g_{4,(23)}^{(4)} & \;h_{42} - \dfrac{\mu_{(14)(23)}}{\mu_{(54)(23)}}\,h_{52}\; & 0 & 0 & h_{13} 
		\end{array}
	\end{equation}
\end{figure*}
The determinant is non-zeros almost surely as the it is the product of terms on the diagonal minus the product of terms on the superdiagonal. 

After eliminating the rows and columns corresponding to $\vect{G}_2$ and $\vect{h}_2$, the remaining coefficients are
\[
g_{2,(1,3)}^{(2)},\quad g_{4,(1,3)}^{(4)},\quad
g_{1,(2,4)}^{(1)},\quad g_{3,(2,4)}^{(3)}.
\]

We next consider the diagonal block
\[
\mat{J}_{33}
\triangleq
\frac{\partial (g_{2,(1,3)}^{(2)},\, g_{4,(1,3)}^{(4)})}
{\partial (h_{21},\, h_{41})},
\]
which corresponds to the coefficient group $\vect{G}_3$ and channel group
$\vect{h}_3$.
It is given by
\begin{equation}
	\mat{J}_{33} =
	\begin{array}{c|cc}
		& h_{21} & h_{41} \\ \hline
		g_{2,(1,3)}^{(2)} &
		h_{23} &
		-\dfrac{\mu_{(5)(1,3)}}{\mu_{(5)(2,3)}}\,h_{52} \\[0.8em]
		g_{4,(1,3)}^{(4)} &
		-\dfrac{\mu_{(5)(2,3)}}{\mu_{(5)(1,3)}}\,h_{51} &
		h_{43}
	\end{array}
\end{equation}
whose determinant is nonzero almost surely. Meanwhile, we can show that the matrices $\mat{J}_{32}$ and $\mat{J}_{34}$ are both zero-matrices. This is because both channel coefficients in $\vect{G}_3$ are from Nodes $1$ and $3$, so the derivatives to a channel coefficient not from Nodes $1$ and $3$ or to Node $1$ and $3$ are zeros. 

Similar analyses apply also to the 4th line of the block matrix, which means $\mat{J}_{44}$ also has a nonzero determinant almost surely, and $\mat{J}_{42} = \mat{0}$ and $\mat{J}_{43} = \mat{0}$. Therefore, the Jacobian matrix is equivalent to
\begin{equation}
	\mat{J}_h =
	\begin{bmatrix}
		\mat{J}_{11} & \mat{J}_{12} & \mat{J}_{13} & \mat{J}_{14} \\
		\mat{J}_{21}            & \mat{J}_{22} & \mat{J}_{23} & \mat{J}_{24} \\
		\mat{J}_{31}            & 0            & \mat{J}_{33} & 0 \\
		\mat{J}_{41}            & 0            & 0            & \mat{J}_{44}
	\end{bmatrix}.
\end{equation}
Since the diagonal blocks $\mat{J}_{22}$, $\mat{J}_{33}$, and $\mat{J}_{44}$ have been shown to be nonsingular almost surely, one may apply Gaussian elimination to eliminate the blocks $\mat{J}_{21}$, $\mat{J}_{31}$, and $\mat{J}_{41}$, using $\mat{J}_{22}$, $\mat{J}_{33}$, and $\mat{J}_{44}$ as pivots, respectively. 

Although $\mat{J}_{22}$ is nonsingular almost surely, explicitly computing their inverse matrices is complex. Instead, we again invoke a polynomial argument. Since all entries of the Jacobian matrix are polynomial functions of the channel coefficients, the determinant $\det(\mat{J}')$ is itself a polynomial. To show that it is not identically zero, it suffices to exhibit one admissible channel realization for which $\det(\mat{J}_h) \neq 0$.

In particular, consider a channel realization with $\vect{s} = \vect{1}$, $[\vect{h}_{5,1}, \dots h_{5,4}] = \vect{1}$. We further impose that all superdiagonal entries of $\mat{J}_{22}$ vanish
while the diagonal entries remain nonzero. This condition is equivalent to the following set of equations:
\begin{align}
	-h_{13} - \frac{ h_{13} h_{24} - h_{14} h_{23} }{ h_{31} - h_{34} } &= 0, \label{eq:sd1} \\[0.6em]
	h_{24} + \frac{ h_{21} h_{34} - h_{24} h_{31} }{ h_{41} - h_{42} } &= 0, \label{eq:sd2} \\[0.6em]
	-h_{31} - \frac{ h_{31} h_{42} - h_{32} h_{41} }{ h_{12} - h_{13} } &= 0, \label{eq:sd3} \\[0.6em]
	\frac{ h_{12} h_{43} - h_{13} h_{42} }{ h_{23} - h_{24} } - h_{42} &= 0. \label{eq:sd4}
\end{align}
The system of equations \eqref{eq:sd1}--\eqref{eq:sd4} can be solved for the variables $h_{14}$, $h_{21}$, $h_{32}$, and $h_{43}$ as functions of the remaining channel coefficients. 
Under this choice, $\mat{J}_{22}$ becomes diagonal, and its determinant reduces to the product of its diagonal entries, which is nonzero. 
Having fixed $\mat{J}_{22}$ according to the above construction, we may further assign simple
numerical values to the remaining free channel variables. For such a choice, after performing Gaussian elimination using the diagonal blocks $\mat{J}_{22}$, $\mat{J}_{33}$, and $\mat{J}_{44}$ as pivots, it can be readily verified that the determinant of the resulting upper-left block $\mat{J}'_{11}$ is nonzero.

The matrix $\mat{J}'$ is block upper triangular, and therefore
$$ \det(\mat{J}) = \det\left(\frac{\partial \vect{g}_{\mathrm{intf}}}{\partial \vect{s}}\right) \cdot \det(\mat{J}'_{11})\cdot\prod_{i=2}^4 \det(\mat{J}_{ii}),$$
which is nonzero almost surely. This concludes the proof of the desired algebraic independence.
}

\section{Proof of the Converse Result (Theorem~\ref{thm:split_converse})} \label{sec:converse}
We recall that the converse proof is derived under the assumption that each IVA  $a_{q,p}$ is split into $\sf r$ equal-length sub-messages $(a^q_{t, \mathcal{T}})_{t\in\mathcal{T}}$ and $\mathcal{T}=\{k\colon p\in \mathcal{M}_k\}$. Denote the set of all sub-IVAs by
\begin{equation} 
\mathcal{V}_{\textnormal{total}}:= \big\{ a^q_{t, \mathcal{T}} \colon \cT \subset [[\sf K]]^{\sf r}, \ t \in \cT, \ q \in [\sf K]\backslash \cT \big\}.
\end{equation} 

\lo{
The overall proof strategy is similar to the proofs in \cite{hachem_degrees_2018} and \cite{bi_normalized_2024}. Namely, we carefully derive a converse bound on  subsystems and then combining the   obtained bounds to a bound on the entire system. However, the way we obtain the converse bound for the subsystem is more involved including a dynamic component, compared to the proofs in \cite{hachem_degrees_2018} and \cite{bi_normalized_2024}. 
}

\newcommand{\Rx}{\textnormal{Rx}}
\newcommand{\Tx}{\textnormal{Tx}}

We  have the following lemma.
\begin{lemma}\label{lma:upper_bound_lemma}
	Fix $\sf P>0$ and a rate tuple in $\mathcal{C}(\sf P)$. For any indices
	$j, t\in[\sf K]$ with $j\neq t$, the following high-SNR inequality holds for $R_a$ the rate of message $a$:
	\begin{equation}\label{eqn:thm_1_new}
		\varlimsup_{\sf P\to\infty}\;
		\frac{1}{\log \sf P} \sum_{a\in\mathcal{V}} R_a \leq \sf 1,
	\end{equation}
	where $\mathcal{V}$ is defined as 
	\begin{equation}
		\mathcal{V} \triangleq \mathcal{V}^{r} \cup \mathcal{V}^{t},
	\end{equation}
	for 	
	\begin{equation}\label{eqn:Vr_sp}
		\mathcal{V}^{\Rx} \triangleq \{ a^{j}_{u, \cT} : \cT \in \left[[\sf K]\backslash \{j\}\right]^{\sf r} \text{ and } u \in \cT \},
	\end{equation}
	and 
	\begin{equation}
		\mathcal{V}^{\Tx} \triangleq \bigcup_{i \in [\sf K]\backslash \{j,t\}} \mathcal{V}^{\Tx, i}, 
	\end{equation}
	\begin{equation}
		\mathcal{V}^{\Tx, i}\triangleq \{ a^{i}_{t, \cT} : \cT \in \left[[\sf K]\backslash \{j, \dots, i\} \right]^{\sf r} \text{ and } t \in \cT \}.
	\end{equation}
In all the above expressions, the index set $\{j, \dots, i\}$ needs to be interpreted according to the natural cyclic order of the set $[\K]\setminus\{t\}$. 
\end{lemma}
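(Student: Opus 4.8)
The plan is to prove \eqref{eqn:thm_1_new} as a single-receiver genie-aided bound centred at Node~$j$: I would give Node~$j$ side information and genie information that is independent of $\mathcal{V}$, show that all of $\mathcal{V}$ is then recoverable from $\vect{Y}_j$ alone, and conclude that $\sum_{a\in\mathcal{V}}R_a$ cannot exceed the point-to-point pre-log $\log\sf P$. Conditioning throughout on the (publicly known) channel realization, write $\mathcal{S}_j$ for the set of sub-IVAs stored at Node~$j$ (equivalently, Node~$j$'s decoding side information, namely those $a^q_{u,\cT}$ with $j\in\cT$; note $\mathcal{S}_j$ is disjoint from $\mathcal{V}$ since every storage set occurring in $\mathcal{V}$ avoids $j$), and let the genie provide $G := \big(\mathcal{V}_{\textnormal{total}}\setminus(\mathcal{V}\cup\mathcal{S}_j),\ \{\vect{Z}_k\}_{k\in[\sf K]\setminus\{j\}}\big)$, i.e.\ every sub-IVA outside $\mathcal{V}$ not already in $\mathcal{S}_j$, together with all noise sequences except Node~$j$'s own. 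Since distinct sub-IVAs are mutually independent and independent of the noises, $\mathcal{V}$ is independent of $(\mathcal{S}_j,G)$, so
\begin{equation}
	H(\mathcal{V}) = H(\mathcal{V}\mid\mathcal{S}_j,G) = I(\mathcal{V};\vect{Y}_j\mid\mathcal{S}_j,G) + H(\mathcal{V}\mid\vect{Y}_j,\mathcal{S}_j,G).
\end{equation}
As the channel gains lie in a bounded interval and each input obeys the power constraint, $h(\vect{Y}_j)\le\sf n\log\sf P + O(\sf n)$; moreover, conditioning on all sub-IVAs and on every noise sequence except $\vect{Z}_j$ leaves $h(\vect{Y}_j\mid\mathcal{V},\mathcal{S}_j,G)=h(\vect{Z}_j)=O(\sf n)$, whence $I(\mathcal{V};\vect{Y}_j\mid\mathcal{S}_j,G)\le\sf n\log\sf P+O(\sf n)$. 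It therefore remains to show that $(\vect{Y}_j,\mathcal{S}_j,G)$ determines $\mathcal{V}$ up to vanishing error probability, after which Fano's inequality and the usual normalization by $\sf n\log\sf P$ (letting $\sf n\to\infty$, then $\sf P\to\infty$) deliver \eqref{eqn:thm_1_new}.

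To recover $\mathcal{V}$ I would proceed in stages. (i) Node~$j$'s own (vanishing-error) decoder applied to $(\vect{Y}_j,\mathcal{S}_j)$ returns $\mathcal{V}^{\Rx}$, which is exactly the collection of all sub-IVAs destined for Node~$j$; combined with $\mathcal{S}_j$ and $G$ this makes every sub-IVA except those in $\mathcal{V}^{\Tx}$ available. (ii) Since $\mathcal{V}^{\Tx}$ consists only of sub-IVAs transmitted by Node~$t$, for each $k'\ne t$ every sub-IVA that Node~$k'$ is responsible for transmitting is now known, so under the non-cooperative assumption $\vect{X}_{k'}$ can be recomputed for all $k'\ne t$. (iii) Subtracting $\sum_{k'\in[\sf K]\setminus\{j,t\}}\mat{H}_{j,k'}\vect{X}_{k'}$ from $\vect{Y}_j$ leaves $\mat{H}_{j,t}\vect{X}_t+\vect{Z}_j$; since $\mat{H}_{j,t}$ is diagonal with almost surely nonzero entries, inverting it recovers $\vect{X}_t$ up to an additive Gaussian term whose variance does not grow with $\sf P$, and hence (reusing the noise sequences in $G$) a reconstruction of each $\vect{Y}_i$, $i\ne t$, up to such a term. (iv) Traversing the nodes of $[\sf K]\setminus\{t\}$ in their natural cyclic order starting immediately after $j$, say $i_1,i_2,\ldots$, at stage $l$ we feed the reconstructed $\vect{Y}_{i_l}$ together with $\mathcal{S}_{i_l}$ into Node~$i_l$'s decoder to obtain all sub-IVAs intended for Node~$i_l$, in particular $\mathcal{V}^{\Tx,i_l}$; a union bound over these finitely many decoder-error events completes the recovery of $\mathcal{V}=\mathcal{V}^{\Rx}\cup\bigcup_l\mathcal{V}^{\Tx,i_l}$.

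The delicate point---and the reason $\mathcal{V}^{\Tx}$ is defined with the nested, cyclic storage restriction---is the well-foundedness of stage~(iv): running Node~$i_l$'s decoder needs $\mathcal{S}_{i_l}$, and some of its sub-IVAs may themselves lie in $\mathcal{V}^{\Tx}$ and hence be neither in $\mathcal{S}_j$ nor in $G$. The constraint $\cT\in\big[[\sf K]\setminus\{j,\ldots,i_m\}\big]^{\sf r}$ is exactly what resolves this: if a sub-IVA of $\mathcal{V}^{\Tx,i_m}$ has $i_l$ in its storage set $\cT$, then $i_l\notin\{j,\ldots,i_m\}$, which forces $m<l$ in the cyclic order; consequently every sub-IVA of $\mathcal{S}_{i_l}\cap\mathcal{V}^{\Tx}$ was already recovered at an earlier stage, $\mathcal{S}_{i_l}\cap\mathcal{V}^{\Rx}$ was recovered in stage~(i), and the remainder of $\mathcal{S}_{i_l}$ sits in $\mathcal{S}_j\cup G$; note also that $\mathcal{V}^{\Tx,i_l}$ contributes nothing to $\mathcal{S}_{i_l}$ since its storage sets avoid $i_l$. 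This acyclic, ``dynamic'' ordering of the per-receiver decoding steps is the ingredient absent from the converses of \cite{hachem_degrees_2018} and \cite{bi_normalized_2024} and is where the main work lies. The remaining technicalities---the estimate $h(\vect{Y}_j)\le\sf n\log\sf P+O(\sf n)$, the almost-sure invertibility of the diagonal channel matrices, and a standard high-SNR argument showing that the $\sf P$-independent-variance perturbations introduced in stage~(iii) do not spoil the reliability of the decoders used in stage~(iv)---are routine.
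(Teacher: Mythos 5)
Your overall architecture is the same as the paper's: a single mutual-information term $I(\mathcal{V};\vect{Y}_j\mid\cdots)\le \sf n\log\sf P+O(\sf n)$ centred at receiver $j$, followed by a sequential Fano argument over the remaining receivers in cyclic order, where the nested storage constraint in $\mathcal{V}^{\mathrm{Tx},i}$ guarantees that each node's side information is available before its decoder is invoked and the non-cooperative assumption isolates $\mathcal{V}^{\mathrm{Tx}}$ inside $\vect{X}_t$. Your well-foundedness argument in the last paragraph is precisely the paper's inclusion \eqref{eq:inc}, and your stages (i)--(ii) are sound.

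The step you defer as ``routine'' is, however, a genuine gap as written. In stages (iii)--(iv) you recover $\vect{X}_t$ from $\vect{Y}_j$ only up to the term $\mat{H}_{j,t}^{-1}\vect{Z}_j$, so what you feed into Node~$i_l$'s decoder is $\hat{\vect{Y}}_{i_l}=\vect{Y}_{i_l}+\mat{H}_{i_l,t}\mat{H}_{j,t}^{-1}\vect{Z}_j$ rather than $\vect{Y}_{i_l}$. The decoders $g^{(\sf n)}_{i,p}$ are only guaranteed to have vanishing error probability on their \emph{true} inputs; a sequence of codes can be reliable on $\vect{Y}_i$ yet fail on $\vect{Y}_i+\vect{N}$ even for an independent perturbation $\vect{N}$ whose variance does not grow with $\sf P$, because at high SNR the decision regions may shrink in some directions exactly at the noise scale. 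There is no ``standard high-SNR argument'' that rescues this; converse proofs must not assume robustness of the given decoders. The paper circumvents the issue by never reconstructing signals: it writes $H(\mathcal{V}^{\mathrm{Tx}}\mid\vect{Y}_2,\cdots)=H(\mathcal{V}^{\mathrm{Tx}}\mid\vect{Y}_2,\vect{Y}_{\mathcal{F}},\cdots)+I(\mathcal{V}^{\mathrm{Tx}};\vect{Y}_{\mathcal{F}}\mid\vect{Y}_2,\cdots)$, applies Fano to the first term with the \emph{true} observations $\vect{Y}_{\mathcal{F}}$, and shows the second term equals $h(\vect{Z}'_{\mathcal{F}})-h(\vect{Z}_{\mathcal{F}})+\sf n\tilde{\epsilon}_{\sf n}$, a quantity independent of $\sf P$, because after conditioning on everything except $\mathcal{V}^{\mathrm{Tx}}$ the observations $\vect{Y}_{\mathcal{F}}$ reduce to $\mat{H}_{\cdot,t}\vect{X}_t$ plus noise, whose signal part is predictable from $\tilde{\vect{Y}}_2$ up to the effective noise $\vect{Z}'_p=\vect{Z}_p-\mat{H}_{p,t}\mat{H}_{j,t}^{-1}\vect{Z}_j$. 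In short: the perturbation you introduce must be charged as a $\sf P$-independent entropy cost on the conditioning side, not pushed through the decoders. Replacing your stages (iii)--(iv) by this conditioning/mutual-information accounting (which also makes your genie of noise sequences unnecessary) completes the proof; the rest of your argument stands.
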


\begin{proof}
	Without loss of generality, we choose $j = 2$ and $t = 1$ throughout this proof.
	\lo{This leads to 
		\begin{equation}
		\mathcal{V}^{\Tx} \triangleq \bigcup_{i=3}^{\sf{K}}  \mathcal{V}^{\Tx, i}, 
	\end{equation}
	and for $i=3, \ldots, \sf K$:	
		\begin{equation}
		\mathcal{V}^{\Tx, i} \triangleq \left \{ a^{i}_{1, \cT} : |\cT|=\sf r, \cT \subset [\sf K]\backslash \{2, \dots, i\} \text{ and } 1 \in \cT \right\}.
	\end{equation}
	Notice that the sets $\mathcal{V}^{\Tx,  \sf K - \sf r +1}, \ldots, \mathcal{V}^{\Tx, \sf{K}}$ are all empty. For $i= 3, \ldots, \sf K - \sf r$, the message set $\mathcal{V}^{\Tx, i}$ only contains sub-IVAs intended to node $i$. Moreover, the union $\mathcal{V}^{\Tx, i} \cup \cdots \cup \mathcal{V}^{\Tx, \sf K}$ only contains sub-IVAs $a_{p, \mathcal{T}}^q$ for transmit sets $\cT$ not containing indices $2, \ldots, i$. As a consequence, the IVAs from transmit sets containing $i$ must be contained in the complement message set: 
	\begin{equation} 
	\{a_{p,\mathcal{T}}^q \colon i\in \mathcal{T}\}\subset  \big(\mathcal{V}^{\Rx} \cup \bar{\mathcal{V}} \cup \mathcal{V}^{\Tx, 3} \cup \cdots \cup \mathcal{V}^{\Tx, i-1}\big).\label{eq:inc}
	\end{equation}
}

	Fix $\sf P>0$ and any rate tuple in $\mathcal{C}(\sf P)$. Consider encoding and decoding functions $\{f_q^{(\sf n)}\}$ and $\{g^{(\sf n)}\}$ such that $p^{(\sf n)}(\mathrm{error})\to 0$ as $\sf n\to\infty$. 
	Fix a blocklength $\sf n$ and set $\mathcal{F}\triangleq \{3, \ldots, \sf{K}\}$. Let $\mathcal{H}$ denote the collection of all channel coefficients.
	For any $\mathcal{S}\subseteq[\sf K]$, define
	$\vect Y_{\mathcal{S}}\triangleq (\vect Y_p :p\in\mathcal{S})$ and
	$\vect Z_{\mathcal{S}}\triangleq (\vect Z_p:p\in\mathcal{S})$.
	All other messages are collectively denoted by
	\begin{equation}
		\bar{\mathcal{V}}\triangleq \mathcal{V}_{\textnormal{total}} \backslash {\mathcal{V}}.
	\end{equation}

	By independence of all messages, channel coefficients, and noise sequences,
	\begin{IEEEeqnarray}{rCl}\label{eq:p1_new_full}
		H(\mathcal{V})
		&=& H(\mathcal{V}\mid \bar{\mathcal{V}},\mathcal{H}) \nonumber\\
		&=& I(\mathcal{V};\*Y_{2}\mid \bar{\mathcal{V}},\mathcal{H})
		+ H(\mathcal{V}\mid \*Y_{2},\bar{\mathcal{V}},\mathcal{H}) \nonumber\\
		&=& h(\*Y_{2}\mid \bar{\mathcal{V}},\mathcal{H}) - h(\*Z_{2})
		+ H(\mathcal{V}\mid \*Y_{2},\bar{\mathcal{V}},\mathcal{H}) \nonumber\\
		&\le& \sf n\cdot \log\!\bigl(1+(\sf K -1)\sf P \sf H_{\max}^2 \bigr)
		+ H(\mathcal{V}\mid \*Y_{2},\bar{\mathcal{V}},\mathcal{H}).\IEEEeqnarraynumspace
	\end{IEEEeqnarray}
	In the following we show that the last entropy term grows slowlier than $n\log P$, which will conclude the proof of the lemma. 
	To this end, start by decomposing the entropy term as 
		\begin{IEEEeqnarray}{rCl}
	\lefteqn{H(\mathcal{V}\mid \*Y_{2},\bar{\mathcal{V}},\mathcal{H})}	\quad \nonumber \\
&=&	H(\mathcal{V}^{\Rx} \mid \*Y_{2},\bar{\mathcal{V}},\mathcal{H}) +	H(\mathcal{V}^{\Tx}\mid \*Y_{2},\bar{\mathcal{V}},\mathcal{V}^{\Rx} ,\mathcal{H}).  \IEEEeqnarraynumspace
		\end{IEEEeqnarray}	
		The first term grows sublinearly in $n$ by Fano's inequality because  
	 the messages in $\mathcal{V}^{\Rx}$ need to  be decoded
	from $\*Y_2$:
	\begin{equation}\label{eq:fano_Vr_full}
		H(\mathcal{V}^{\Rx} \mid \*Y_2,\bar{\mathcal{V}},\mathcal{H})
		\le \sf n\,\epsilon_{\sf n},
	\end{equation}
	for some $\epsilon_{\sf n}\to 0$ as $\sf n\to\infty$.

For the second term, we write
		\begin{IEEEeqnarray}{rCl}
H(\mathcal{V}^{\Tx} \mid \*Y_{2},\bar{\mathcal{V}},\mathcal{V}^{\Rx} ,\mathcal{H})
	&= & H(\mathcal{V}^{\Tx} \mid \*Y_{2},  \*Y_{\mathcal{F}},\bar{\mathcal{V}},\mathcal{V}^{\Rx} ,\mathcal{H})\nonumber \\
	&& + 	I(\mathcal{V}^{\Tx};\*Y_{\mathcal{F}} \mid
		\mathcal{V}^{\Rx},\*Y_2,\bar{\mathcal{V}},\mathcal{H}), \IEEEeqnarraynumspace
		\end{IEEEeqnarray}	
	and observe that 
	\begin{IEEEeqnarray}{rCl}
		\lefteqn{ 
		I(\mathcal{V}^{\Tx};\*Y_{\mathcal{F}} \mid
		\mathcal{V}^r{\Rx}\*Y_2,\bar{\mathcal{V}},\mathcal{H})\qquad 
	 } \nonumber\\
		&\stackrel{(a)}{=}&
		h(\tilde{\*Y}_{\mathcal{F}} \mid
		\mathcal{V}^{\Rx},\tilde{\*Y}_2,\bar{\mathcal{V}},\mathcal{H})
		- h(\*Z_{\mathcal{F}}) \\
		&\stackrel{(b)}{=}&
		h(\tilde{\*Y}_{\mathcal{F}} \mid
		\mathcal{V}^{\Rx},\tilde{\*Y}_2,\bar{\mathcal{V}},\mathcal{H},E=1)
		- h(\*Z_{\mathcal{F}})\\
		&\stackrel{(c)}{\le}&
		h(\*Z_{\mathcal{F}}' \mid
		\mathcal{V}^{\Rx},\tilde{\*Y}_2,\bar{\mathcal{V}},\mathcal{H},E=1)
		- h(\*Z_{\mathcal{F}})\\
		&\le&
		h(\*Z_{\mathcal{F}}') - h(\*Z_{\mathcal{F}})
		+ \sf n\tilde{\epsilon}_{\sf n}, \label{eq:Vt_chain_ae}
	\end{IEEEeqnarray}
	where $\tilde{\epsilon}_{\sf n}$ is a sequence that vanishes in $\sf n$ and  the (in)equalities are justified as follows:
	\begin{itemize}
		\item In $(a)$, we define 
		$
		\tilde{\vect Y}_p \triangleq \mat H_{p,t} \vect X_t + \vect Z_p, \quad p \in [\sf K].
		$
		and $\tilde{\*Y}_{\mathcal{S}} \triangleq (\tilde{\vect Y}_p : p\in\mathcal{S})$. Since $\mathcal{V}^{\Tx}$ is carried only by the input of Node~$t$, conditioning on $(\mathcal{V}^{\Rx},\bar{\mathcal{V}},\mathcal{H})$ removes all other signal components, yielding the equality in $(a)$.
		
		\item In $(b)$, we introduce the binary random variable $E$, which equals $1$ if for all $t'\in[\sf n]$, the channel coefficient $ H_{j,t}(t') \neq 0$. Notice that $\Pr[E=1]=1$ since the channel coefficients are drawn independently from  continuous distributions.
		
		\item To see $(c)$, define for each  $p\in\mathcal{F}$ the noise-modified variables 
		$ \vect Z_p' \triangleq \vect Z_p - \vect H_{p,t} \vect H_{j,t}^{-1} \*Z_j,$
		and $\*Z_{\mathcal{F}}'\triangleq (\vect Z_p' \colon p\in\mathcal{F})$. Notice that in the event $E=1$, the signal component of $\tilde{\*Y}_{\mathcal{F}}$ can be perfectly canceled, leaving only the effective noise $\*Z_{\mathcal{F}}'$. 
		
	\end{itemize}

	We finally show 
		\begin{IEEEeqnarray}{rCl}
		\lefteqn{H\left(\mathcal{V}^{\Tx} \mid \vect{Y}_2, \vect{Y}_{\mathcal{F}}, \mathcal{V}^{\Rx}, \bar{\mathcal{V}}, \mathcal{H} \right)}     \nonumber \\
		&=& \sum_{i=3}^{\sf K}
		H\left( \mathcal{V}^{{\Tx},i} \mid \vect{Y}_2, \vect{Y}_{\mathcal{F}}, \mathcal{V}^{\Rx}, \bar{\mathcal{V}}, \mathcal{V}^{{\Tx},3},\dots, \mathcal{V}^{{\Tx},i-1}, \mathcal{H}
		\right) \nonumber \\\\
		&\leq& \sum_{i=3}^{\sf K} \epsilon_{i,\sf n} \sf n , \label{eq:fano_Vt_full}
	\end{IEEEeqnarray}
	where $\epsilon_{i, \sf n}$ denotes a vanishing sequence for each $i$ and  the last inequalty holds by the following considerations. 
	
	For example, notice that the sub-IVAs known to node 3, i.e., $\{a_{p,\cT}^q\colon 3 \in \cT\}$ are all contained in the set  $(\mathcal{V}^{\Rx}\cup \bar{\mathcal{V}})$. Moreover, $\mathcal{V}^{\Tx,3}$ only contains sub-IVAs intended for node 3.  Since $ \vect{Y}_{\mathcal{F}}$ contains $ \vect{Y}_{3}$, we thus have by Fano's Inequality equality: 
	\begin{equation}
		H\left(\mathcal{V}^{\Tx,3} \mid \vect{Y}_2, \vect{Y}_{\mathcal{F}}, \mathcal{V}^{\Rx}, \bar{\mathcal{V}} \right)
		\le \epsilon_{3, \sf n} \sf n,
	\end{equation}
	for a sequence $\epsilon_{3, \sf n}\to 0$ as $\sf n\to\infty$.
	
	\lo{As noticed in \eqref{eq:inc}, for each $i=4, \ldots, \sf K$,}
	\sh{For each $i=4, \ldots, \sf K$,} the sub-IVAs known to Node $i$, i.e., $\{a_{p,\cT}^q\colon i \in \cT\}$,  are contained in the  set $(\mathcal{V}^{\Rx}\cup \bar{\mathcal{V}} \cup \mathcal{V}^{\Tx,3} \cup \cdots\cup\mathcal{V}^{\Tx,i-1})$. Since $\mathcal{V}^{\Tx,i}$ only contains sub-IVAs intended for node $i$ and $\vect{Y}_i$ is contained in $\vect{Y}_{\mathcal{F}}$, we have  by Fano's inequality: 
	\begin{equation}
		H\left(\mathcal{V}^{\Tx,i} \mid \vect{Y}_2, \vect{Y}_{\mathcal{F}} \mathcal{V}^r, \bar{\mathcal{V}}, \mathcal{V}^{\Tx,3} , \ldots ,\mathcal{V}^{\Tx,i-1} \right)
		\le \epsilon_{i, \sf n} \sf n,
	\end{equation}
	for a sequence  $\epsilon_{i, \sf n}\to 0$ as $\sf n\to\infty$. This proves \eqref{eq:fano_Vt_full}. 		
	Combining \eqref{eq:p1_new_full}--\eqref{eq:fano_Vt_full}, we obtain
	\begin{IEEEeqnarray}{rCl}
		H(\mathcal{V})
		&\le&
		\sf n\cdot \log\bigl(1+ (\sf K -1)\sf P \sf H_{\max}^2 \bigr)
		+ \sf n\,\epsilon_{\sf n} \nonumber \\
		&&+ h(\*Z_{\mathcal{F}}')-h(\*Z_{\mathcal{F}})
		+ \sf n\,\tilde{\epsilon}_{\sf n}.
	\end{IEEEeqnarray}
	Dividing by $\sf n\log \sf P$ and letting $\sf n\to\infty$ and $\sf P\to\infty$
	establishes \eqref{eqn:thm_1_new}.
\end{proof}

The proof is obtained by summing the upper bound in Lemma~\ref{lma:upper_bound_lemma} over all  index pairs $(t, j) \in [\sf K]$ with $t \neq j$,  and then normalizing. 
\lo{Each rate term appears the same number of times, namely 
\begin{equation} 
\frac{\sf K (\sf K-1) |\mathcal{V}|}{|\mathcal{V}_{\textnormal{total}}|},
\end{equation} 
and since we have $\sf K (\sf K -1)$ bounds, we obtain:
\begin{equation}\label{eq:Su}
	\mathbf{SDoF} \leq  \frac{|\mathcal{V}_{\textnormal{total}}|}{|\mathcal{V}|}.
\end{equation}
We have 
\begin{equation} \label{eq:Vtot}
|\mathcal{V}_{\textnormal{total}}| = \sf r \binom{ \sf K }{ \sf r } (\sf K -\sf r).
\end{equation} 
To compute the size of $\mathcal{V}$, notice that for each $i\leq  \sf K - \sf r$, the size of $\mathcal{V}^{\Tx,\,i}$ is
\begin{equation}
	|\mathcal{V}^{\Tx,i}|
	= \binom{\sf K-i}{\sf r-1},
\end{equation}
because to complete the transmit set $\cT$,  $\sf r -1$  additional indices need to be chosen from the $\sf K-i$ possible elements
\[
[\sf K]\setminus\{1,2,\dots,i\} = \{i+1,\dots,K\}.
\]
Using the Hockey-stick identity, we obtain
\begin{equation}
	|\mathcal{V}^{\Tx}|
	= \sum_{i=3}^{\sf K-\sf r+1} \binom{\sf K-i}{\sf r-1} = \binom{\sf K-2}{\sf r}.
\end{equation}
Moreover, 
\begin{equation}
|\mathcal{V}^{\Rx}| = \binom{\sf K-1}{\sf r} \cdot \sf r, 
\end{equation} and 
thus 
\begin{equation} 
|\mathcal{V}| = \binom{\sf K-2}{\sf r} +  \binom{\sf K-1}{\sf r} \cdot \sf r.\label{eq:Vl}
\end{equation} 
Combining \eqref{eq:Su} with \eqref{eq:Vtot} and \eqref{eq:Vl}, we obtain
\begin{equation}
	\mathbf{SDoF} \leq \frac{\sf r \binom{ \sf K }{ \sf r } (\sf K -\sf r) }{ \binom{\sf K-2}{\sf r} +  \binom{\sf K-1}{\sf r} \cdot \sf r }
	= \frac{\sf K (\sf K - 1) \sf r}{(\sf K - 2)\sf r + \sf K-1}.
\end{equation}
}
\sh{
We obtain $ \mathbf{SDoF} \leq |\mathcal{V}_{\textnormal{total}}|/|\mathcal{V}|$, thus
\begin{equation}
	\mathbf{SDoF} \leq \frac{\sf r \binom{ \sf K }{ \sf r } (\sf K -\sf r) }{ \binom{\sf K-2}{\sf r} +  \binom{\sf K-1}{\sf r} \cdot \sf r }
	= \frac{\sf K (\sf K - 1) \sf r}{(\sf K - 2)\sf r + \sf K-1}.
\end{equation}
}

\section{Conclusion}
\lo{
We proposed a new wireless MapReduce coding scheme based on interference alignement (IA) and zero-forcing (ZF) for computation loads   $\sf r =\lfloor (\sf K -1)/2\rfloor$ and show that it improves over all previous achievable NDTs. We further provide a converse result (lower bound on the NDT) for all non-cooperative coding schemes where IVAs are split into submessages and these submessages are transmitted (in an arbitrary fashion) by a single node only, thus preventing cooperation between nodes. This converse result lies above the NDT of our new IA \& ZF scheme, thus proving strict suboptimality of non-cooperative transmissions for wireless MapReduce.}
\sh{
We propose a new wireless MapReduce coding scheme based on IA and ZF for computation load $\sf r=\lfloor(\sf K-1)/2\rfloor$, which strictly improves upon all previous achievable NDTs. We also establish a converse for non-cooperative schemes, where IVAs are split into submessages and transmitted by single nodes. The results show that such schemes are strictly suboptimal compared to our IA–ZF approach.
}

 \section*{Acknowledgements}
The work of the authors has been supported by the ERC under grant agreement 101125691.

\bibliographystyle{IEEEtran}
\bibliography{IEEEabrv,references}

\end{document}